\newcommand{\cB}{\mathcal{B}}
\newcommand{\cR}{\mathcal{R}}
\newcommand{\cP}{\mathcal{P}}
\newcommand{\cC}{\mathcal{C}}
\newcommand{\cF}{\mathcal{F}}
\newcommand{\qbinom}[3][q]{\genfrac[]{0pt}{}{#2}{#3}_{#1}}
\newcommand{\NN}{\mathbb{N}}
\newcommand{\ZZ}{\mathbb{Z}}
\newcommand{\QQ}{\mathbb{Q}}
\newcommand{\KK}{\mathbb{K}}
\DeclareMathOperator{\lcm}{lcm}
\numberwithin{equation}{section}
\theoremstyle{plain}
\newtheorem{theorem}{Theorem}[section]
\newtheorem{lemma}[theorem]{Lemma}
\newtheorem{corollary}[theorem]{Corollary}
\begin{document}

\title{Factorial Basis Method for \texorpdfstring{$q$}{q}-Series Applications}%\texorpdfstring{$q$}{q}-Binomial definite sums solutions}

\author{Antonio Jiménez-Pastor}
%\authornote{Both authors contributed equally to the paper}
\orcid{0000-0002-6096-0623}
\affiliation{%
  \institution{Aalborg University}
  \department{Distributed, Embedded and Intelligent System}
  \city{Aalborg}
  \country{Denmark}
}
\email{ajpa@cs.aau.dk}
\author{Ali K. Uncu}
\orcid{0000-0001-5631-6424}
\affiliation{%
  \institution{RICAM, Austrian Academy of Sciences\\ \& University of Bath}
  \city{Linz}
  \country{Austria \& United Kingdom}
}
\email{akuncu@ricam.oeaw.ac.at}
%\author{{{space   }}
\subtitle{\textit{\footnotesize To the memory of an inspirational mathematician, Marko Petkov\v{s}ek.}}

%%%%%%%%%%%%%%%%%%%%%%%%%%%%%%%%%%%%%%%%%%%%%%%%%%%%%%%%%%%%%%%%%%%%%%%%%%%%%%%%%%%%%%%%%%%%
% Keywords and CCSXML
\keywords{q-series, 
  definite hypergeometric sums,
  integer partitions,
  holonomic,
  shift-compatible factorial bases
}

\begin{CCSXML}
    <ccs2012>
    <concept>
    <concept_id>10010147.10010148.10010149.10003628</concept_id>
    <concept_desc>Computing methodologies~Combinatorial algorithms</concept_desc>
    <concept_significance>500</concept_significance>
    </concept>
    <concept>
    <concept_id>10010147.10010148.10010162.10010163</concept_id>
    <concept_desc>Computing methodologies~Special-purpose algebraic systems</concept_desc>
    <concept_significance>300</concept_significance>
    </concept>
    <concept>
    <concept_id>10010147.10010148.10010149.10010150</concept_id>
    <concept_desc>Computing methodologies~Algebraic algorithms</concept_desc>
    <concept_significance>500</concept_significance>
    </concept>
    </ccs2012>
\end{CCSXML}
    
\ccsdesc[500]{Computing methodologies~Combinatorial algorithms}
\ccsdesc[300]{Computing methodologies~Special-purpose algebraic systems}
\ccsdesc[500]{Computing methodologies~Algebraic algorithms}

\begin{abstract}
  % The Factorial Basis method allows finding, for a given quasi-triangular, shift-compatible factorial basis, a solution to a linear recurrence equation in the form of a definite-sum solution. This method, related with the Inverse Creative Telescoping problem, can be extended for $q$-sequences to be put in use for $q$-combinatorics. In this paper we show how to do this extension and adapt the tools used in the Factorial Basis method to fit into the setting of $q$-sequences. In particular, we use this extended technique to prove automatically some known $q$-identities and to discover new identities in $q$-combinatorics related, for example, with the $q$-analog of the Rogers-Ramanujan identities.
  %% Above version was drafted by myself. Below version was adapted by ChatGPT using my version.
  The Factorial Basis method, initially designed for quasi-triangular, shift-compatible factorial bases, provides solutions to linear recurrence equations in the form of definite-sums. This paper extends the Factorial Basis method to its $q$-analog, enabling its application in $q$-calculus. We demonstrate the adaptation of the method to $q$-sequences and its utility in the realm of $q$-combinatorics. The extended technique is employed to automatically prove established identities and unveil novel ones, particularly some associated with the Rogers-Ramanujan identities.
\end{abstract}

\maketitle

\section{Introduction}\label{sec:introduction}

Many combinatorics questions trickle down to finding a particular solution of a recurrence relation. This is no different in $q$-combinatorics. One of the most common appearances of sequences defined by a recurrence relation appears in the partition identities. A \textit{partition} $\pi$ is a non-increasing finite sequence $\pi = (\lambda_1,\lambda_2,\dots)$ of positive integers, called \textit{parts}. The \textit{size} of a partition is the sum of all its parts. Conventionally, the empty sequence is the unique partition of 0. A \textit{partition identity} is a statement that equates the number of partitions of size $n$ from two different sets of partitions. Maybe the best known and most recited partition identities are the Rogers--Ramanujan identities \cite{andrews1998theory}:\begin{theorem}[Rogers-Ramanujan identities, combinatorial version]\label{thm:RR_comb} Let $i=1$ or $i=2$. For every positive integer $n$, the number of partitions of size $n$ such that differences between consecutive parts are $\geq 2$ and the smallest part of the partition is $\geq i$ equals the number of partitions of size $n$ into parts congruent to $\pm i$ modulo 5.
\end{theorem}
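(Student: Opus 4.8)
The plan is to translate the combinatorial statement into an equality of generating functions and then to certify that equality with the recurrence-based machinery of this paper. Because a partition identity asserts that two counting sequences agree for every $n$, it is equivalent to the equality of the associated generating functions as formal power series in $q$; so the first task is to write both sides in closed $q$-series form. Fix $i\in\{1,2\}$ and abbreviate $(q;q)_k=\prod_{j=1}^{k}(1-q^j)$.

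For the gap side, I would count by the number of parts $k$. Among partitions with exactly $k$ parts, consecutive differences $\ge 2$, and smallest part $\ge i$, the minimal one is the staircase $i,i+2,\dots,i+2(k-1)$ of size $k^2+(i-1)k$; every admissible partition is this staircase plus an arbitrary partition into at most $k$ parts, contributing $1/(q;q)_k$. Summing over $k$ gives the gap-side generating function
\[
G_i(q)=\sum_{k\ge 0}\frac{q^{\,k^2+(i-1)k}}{(q;q)_k}.
\]
A partition into parts $\equiv\pm i\pmod 5$ is an unrestricted multiset of such parts, so the congruence side factors as
\[
P_i(q)=\prod_{\substack{m\ge 1\\ m\equiv\pm i\ (5)}}\frac{1}{1-q^m}.
\]
The theorem is thus equivalent to the analytic identity $G_i(q)=P_i(q)$.

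To reach $G_i=P_i$ I would introduce the deformation $F_i(a)=\sum_{k\ge0} a^k q^{\,k^2+(i-1)k}/(q;q)_k$, so that $G_i(q)=F_i(1)$, and extract a $q$-difference equation for it. Subtracting $F_i(aq)$ from $F_i(a)$, the factor $1-q^k$ cancels one term of $(q;q)_k$, and after the shift $k\mapsto k+1$ one obtains the three-term functional equation
\[
F_i(a)=F_i(aq)+a\,q^{\,i}\,F_i(aq^2).
\]
This is exactly the kind of creative-telescoping certificate that the Factorial Basis method produces automatically. Comparing coefficients of $a^m$ yields $c_m(1-q^m)=c_{m-1}\,q^{\,2m+i-2}$ with $c_0=1$, which determines the $a$-expansion of $F_i$ uniquely; hence $F_i$ is pinned down by the functional equation together with $F_i(0)=1$.

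It then remains to show that the product satisfies the same constraints. Here I would apply the Jacobi triple product identity to rewrite $(q;q)_\infty\,P_i(q)$ as the appropriate Rogers--Ramanujan theta sum, and verify that this closed form obeys the same $q$-difference equation and normalization; uniqueness of the power-series solution then forces $G_i(q)=P_i(q)$, and reading off the coefficient of $q^n$ recovers the theorem. The mechanical part --- deriving and certifying the sum-side recurrence --- is precisely what the method automates; the main obstacle is the product side: establishing the functional equation for the infinite product (equivalently, recognizing the Rogers--Ramanujan continued fraction and invoking the triple product in the correct normalization) and confirming that the two solutions share enough initial data to be identified is where the genuine difficulty lies.
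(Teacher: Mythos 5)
Your reduction of the combinatorial statement to an analytic identity is correct and matches how the paper itself reads this theorem: the staircase argument identifies the gap side with $\sum_{k\ge 0} q^{k^2+(i-1)k}/(q;q)_k$ and the congruence side with the infinite product, i.e., with the two sides of \eqref{eq:RR_GF}. Note, however, that the paper never proves Theorem~\ref{thm:RR_comb}; it is quoted as classical background from \cite{andrews1998theory}, and the factorial-basis machinery is applied only to the finitized sum side (the sequence $RR_i(N)$ and its recurrence), never to establish \eqref{eq:RR_GF}. So your proposal has to stand on its own, and it does not.

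The gap is the product side, and it is not the routine verification you describe. Your functional equation $F_i(a)=F_i(aq)+aq^{i}F_i(aq^2)$ is correct, as is the coefficient recursion $c_m(1-q^m)=c_{m-1}q^{2m+i-2}$ it induces, and uniqueness does pin down the two-variable series $F_i(a)$. But $P_i(q)$, and the theta sum that the Jacobi triple product turns $(q;q)_\infty P_i(q)$ into, are functions of $q$ alone: they contain no $a$, so the instruction to ``verify that this closed form obeys the same $q$-difference equation'' is vacuous. Indeed, if a function $C(q)$ independent of $a$ satisfied the equation, then $C=C+aq^iC$ would force $C=0$. To exploit uniqueness you must construct a \emph{second} two-variable function $H_i(a)$ satisfying the same $a$-functional equation and normalization $H_i(0)=1$ whose value at $a=1$ is visibly the product $P_i(q)$. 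Producing such an $H_i$ is precisely Rogers' proof (via theta-like series with an extra parameter); the alternatives (Watson's ${}_8\phi_7$ transformation, Schur's or Bressoud's polynomial finitizations) are comparably substantial. Conceding in your final sentence that this is ``where the genuine difficulty lies'' does not discharge it: as written, your argument proves only the easy half --- the combinatorial-to-analytic translation and the sum-side recurrence --- and leaves the Rogers--Ramanujan identity itself, hence the theorem, unproven.
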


There is an equivalent statement of the Rogers--Ramanujan identities in the $q$-series world:

\begin{theorem}[Rogers-Ramanujan identities, $q$-series version]\label{thm:RR_q}  Let $i=1$ or $i=2$. For $|q|<1$, we have \begin{equation}\label{eq:RR_GF}\sum_{n\geq 0}\frac{q^{n^2+(i-1)n}}{(q;q)_n} = \frac{1}{(q^i;q^5)_\infty(q^{5-i};q^5)_\infty},\end{equation}  
where 
$(a;q)_k := \prod_{j=0}^{k-1} (1-a q^j)$ for $k\in \mathbb{Z}_{\geq 0}\cup\{\infty\}$,
is the classical \textit{$q$-Pochhammer symbol}  \cite{andrews1998theory, gasper2011basic}.
\end{theorem}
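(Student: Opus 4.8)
The plan is to reduce both cases $i=1,2$ to a single bivariate $q$-difference equation and then check that each side of \eqref{eq:RR_GF} satisfies it. First I would introduce the generating function
\[
F(x) = \sum_{n\geq 0}\frac{q^{n^2}x^n}{(q;q)_n},
\]
so that the left-hand side of \eqref{eq:RR_GF} equals $F(1)$ when $i=1$ and $F(q)$ when $i=2$. The key computation is the $q$-difference equation for $F$: splitting the factor $(1-q^n)$ off $(q;q)_n$ and reindexing $n\mapsto n+1$ gives
\[
F(x)-F(xq) = \sum_{n\geq 1}\frac{q^{n^2}x^n}{(q;q)_{n-1}} = xq\,F(xq^2),
\]
hence $F(x) = F(xq) + xq\,F(xq^2)$. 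Comparing coefficients of $x^n$ shows that this equation together with $F(0)=1$ determines the power series $F$ uniquely (indeed it recovers $c_n = c_{n-1}q^{2n-1}/(1-q^n)$, so $c_n = q^{n^2}/(q;q)_n$). Thus it suffices to produce a matching solution on the product side.

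Next I would treat the right-hand side, where for $i=1,2$ the targets are $1/\bigl((q;q^5)_\infty(q^4;q^5)_\infty\bigr)$ and $1/\bigl((q^2;q^5)_\infty(q^3;q^5)_\infty\bigr)$. The plan is to lift these infinite products to a bivariate function $P(x)$ whose specializations $P(1)$ and $P(q)$ recover the two products, using the Jacobi triple product identity to expose the $x$-dilation structure, and then to verify that $P$ satisfies the \emph{same} equation $P(x)=P(xq)+xq\,P(xq^2)$ with $P(0)=1$. By the uniqueness established above this forces $F=P$, and specializing $x=1$ and $x=q$ yields both Rogers--Ramanujan identities at once.

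From the viewpoint of the present paper the same conclusion can be obtained algorithmically: the summand $a_n = q^{n^2+(i-1)n}/(q;q)_n$ is $q$-hypergeometric, so the factorial-basis and $q$-creative-telescoping machinery developed here produces a linear $q$-recurrence annihilating the partial sums, which one then matches against the recurrence satisfied by the ($q$-holonomic) theta quotient obtained from the product side, verifying agreement on finitely many initial terms.

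I expect the main obstacle to lie on the product side rather than the sum side. Establishing that the infinite products obey the functional equation $P(x)=P(xq)+xq\,P(xq^2)$ is not a direct telescoping: it requires rewriting them via the Jacobi triple product into a form in which the dilation in $x$ becomes manifest, and this is precisely where the modulus $5$ and the residues $\pm i$ enter. One must also be careful that \eqref{eq:RR_GF} is an identity of analytic functions on $|q|<1$ (equivalently of formal power series in $q$), while the functional equation lives in the auxiliary variable $x$; the uniqueness argument therefore has to be applied in the correct coefficient ring, and convergence for $|q|<1$ justified, before the specializations $x=1$ and $x=q$ are taken.
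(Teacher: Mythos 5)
Your proposal cannot be compared against a proof in the paper, because the paper does not prove this theorem: it is stated as classical background, cited to Andrews' book and to Gasper--Rahman, and the paper later \emph{uses} \eqref{eq:RR_GF} as an input (for instance to deduce Theorem~\ref{thm:Shunsuke_RR} by letting $N\to\infty$) rather than deriving it. So the proposal must stand on its own. Its sum-side half does stand: the computation $F(x)-F(xq)=xq\,F(xq^2)$ is correct, and the induced coefficient recursion $c_n=q^{2n-1}c_{n-1}/(1-q^n)$, $c_0=1$, does pin $F$ down uniquely as a formal power series in $x$ over $\QQ[[q]]$, with $F(1)$ and $F(q)$ giving the two left-hand sides.

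The genuine gap is the product side, and it is not a deferrable verification: it is the entire content of the theorem. The plan as stated --- lift the products to a bivariate $P(x)$ via the Jacobi triple product and then check $P(x)=P(xq)+xq\,P(xq^2)$ --- runs into a well-known dead end: the natural bivariate lifts, e.g.\ $1/\bigl((xq;q^5)_\infty(xq^4;q^5)_\infty\bigr)$, fail both requirements (this one does not satisfy the functional equation, and its value at $x=q$ is $1/\bigl((q^2;q^5)_\infty(q^5;q^5)_\infty\bigr)$, not the $i=2$ product); no direct lift makes the modulus-$5$ structure commute with $q$-dilations of $x$. Every classical proof along these lines (Rogers', as in Hardy--Wright or Andrews' book, Ch.~7) goes in the opposite direction: one constructs a bivariate \emph{theta-like series}, essentially $\tfrac{1}{(xq;q)_\infty}\sum_{n}(-1)^n x^{2n} q^{n(5n+1)/2}\,(\cdots)$ with carefully chosen correction factors, proves by nontrivial series manipulation that it satisfies the functional equation, and only at the specializations $x=1$ and $x=q$ invokes the Jacobi triple product to convert the resulting theta quotients $\tfrac{1}{(q;q)_\infty}\sum_{n\in\mathbb{Z}}(-1)^n q^{n(5n+1)/2}$ and $\tfrac{1}{(q;q)_\infty}\sum_{n\in\mathbb{Z}}(-1)^n q^{n(5n+3)/2}$ into the products in \eqref{eq:RR_GF}. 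Constructing that bivariate series and verifying its functional equation is where all the difficulty lies, and your proposal supplies neither. The algorithmic fallback in your last paragraph has the same hole: creative telescoping needs a free parameter, and \eqref{eq:RR_GF} is an identity in $q$ alone, so one must first finitize \emph{both} sides polynomially --- in particular the product side, via Schur-type polynomials written as alternating sums of $q$-binomial coefficients (this is exactly what Paule's computer proof does, and what the shape of Theorem~\ref{thm:Main_Shunsuke_thm2} echoes); without specifying that finitization, ``match the recurrences and finitely many initial terms'' is not an executable plan.
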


The sides of the identity \eqref{eq:RR_GF} can be interpreted as the generating functions for the sets of partitions mentioned in Theorem~\ref{thm:RR_comb}. The summand $q^{n^2+(i-1)n}/(q;q)_n$, of the left-hand side of \eqref{eq:RR_GF}, is the generating function for the number of partitions with exactly $n$ parts, where the difference between parts is $\geq 2$ and the smallest part $\geq i$. Similarly, the right-hand side of \eqref{eq:RR_GF} is the generating function for the number of partitions into $\pm i$ modulo 5 parts. 

It is often important to have both the combinatorial and the $q$-series version of a partition identity. These different representations of the same problem increase the number of available proof techniques one can try. In fact, the Rogers--Ramanujan identities were first proven through $q$-series techniques, and, still to this day, there is no direct bijective proof.

In general, once a partition identity is identified through combinatorics the generating function equivalent is not apparent. It is usually easy to find generating functions for the number of partitions that only involve congruence conditions; those are infinite products. However, there is no standard way of constructing a generating function for the number of partitions with other conditions. 

While aiming to find such a generating function representation, it is common practice to pick a statistic of the partitions in question, bound that statistic, and form a recurrence relation over that. For example, the largest part of a partition.  Let $i=1$ or $2$, and consider $RR_i(N)$ to be the generating function of the partitions where differences between consecutive parts are $\geq 2$, the smallest part $\geq i$ and the largest part is $\leq N$. Hence, the limit of $RR_i(N)$ as $N \rightarrow \infty$ is the left-hand side sum of \eqref{eq:RR_GF}. Then one can easily check that $RR_i(N)$ satisfies the recurrence 
\begin{equation}\label{eq:RR_rec}
  RR_i(N) = RR_i(N-1) +  q^N RR_i(N-2),
\end{equation} with initial conditions $RR_i(0)=1$ and $RR_i(1)= 1+ q \delta_{i,1}$ \cite[Ch 3]{andrews1998theory}, where the exponent of $q$ counts the size of the partitions, and $\delta_{i,j}$ is the Kronecker delta function.

Our goal here is to construct an explicit formula for $RR_i(N)$ as a (nested) definite sum. To that end, we will consider a basis $B_{k}(N):=B_{k}(N;q)$ of $\mathbb{Z}[q]$ first and then find the recurrence $c^{(i)}_k:=c^{(i)}_k(q)$ satisfies, which is uniquely defined by the relation \begin{equation}\label{eq:RR_def_sum}RR_i(N) = \sum_{k=0}^N c^{(i)}_{k} B_{k}(N).\end{equation} Note that this method can be applied in a nested fashion. Once the recurrence and the initial values for $c^{(i)}_k$ are identified, one can pick a new basis and try to construct a definite sum representation of $c^{(i)}_k$ with respect to that basis, etc. We should also note that if one can find a sum representation of $RR_i(N)$, then by the limit $N\rightarrow \infty$, they can find a $q$-series representation of the generating function for the number of partitions that satisfy the difference conditions of Theorem~\ref{thm:RR_comb}.

We will be focusing on purely hypergeometric functions as summands. This ensures that we are always working with $q$-holonomic functions (i.e. functions that satisfy finite order linear recurrence relations with polynomial coefficients \cite{kauers2023Book}). Finding the recurrence satisfied by a definite sum with purely hypergeometric summands can be done using the \textit{Zeilberger's algorithm} (a.k.a. \textit{creative telescoping algorithm}) \cite{petkovvsek1996bak}. In that light, our approach is a way to attempt the \textit{inverse Zeilberger's problem} for $q$-series applications by user-guided experimentation. The technique to be developed in the later sections follows the footsteps of \cite{JimenezPastor2023}, in which the first author and Petkov\v{s}ek laid the foundational work for this method at the $q=1$ level. They called their approach \textit{factorial basis method}, and we will keep the name here. 

%\Antonio{Should we elaborate more about the Inverse Zeilberger problem and mention the work on~\cite{JimenezPastor2023} here?}

To give an example of how this method can and will be used, we look back at \eqref{eq:RR_def_sum}. One can choose the $q$-binomial basis ``$N$ choose $k$" for $B_{k}(N;q)$, where 
\begin{equation}\label{eq:q_binom}
  \qbinom{N}{k}=\left\{ \begin{array}{cl}
    \frac{(q;q)_N}{(q;q)_k(q;q)_{N-k}}, & \text{if}\ N\geq k\geq0,\\
    0, & \text{otherwise}.
\end{array} \right.
\end{equation}
Then, write the definite sum for $RR_i(N)$ 
\begin{equation}\label{eq:Shunsuke_Fin_sum}
  RR_i(N) = \sum_{k=0}^N c^{(i)}_k \qbinom{N}{k}
\end{equation} 
and factorial basis method finds and proves the recurrence and the initial values that define $c^{(i)}_k$ in \eqref{eq:RR_rec}. To be explicit, for this basis, we would get that $c^{(i)}_{k}$ satisfies the recurrence 
\begin{equation}\label{eq:Shunsuke_RR_rec} 
  c^{(i)}_k = -q^{k} c^{(i)}_{k-1} + q^{k-1} c^{(i)}_{k-2},
\end{equation} 
where $c^{(i)}_0 = 1$, $c^{(1)}_1= q$, and $c^{(2)}_1= 0$ for $i=1$ or 2. This is an intriguing new relation attached to a well-known and fundamental partition theorem. 
Moreover, by $N\rightarrow\infty$ and using \eqref{eq:RR_GF} we see the following theorem:
\begin{theorem}\label{thm:Shunsuke_RR} 
  For $i=1$ and 2, and $c^{(i)}_k$ defined as above, we have
  \begin{equation}\label{eq:Shunsuke_RR_thm}
    \sum_{k\geq 0} \frac{c^{(i)}_k}{(q;q)_k} = \frac{1}{(q^i;q^5)_\infty(q^{5-i};q^5)_\infty}.
  \end{equation}
\end{theorem}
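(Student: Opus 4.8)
The plan is to obtain the identity by letting $N\to\infty$ directly in the finite expansion \eqref{eq:Shunsuke_Fin_sum},
\[
  RR_i(N) = \sum_{k=0}^N c^{(i)}_k \qbinom{N}{k},
\]
which defines the $c^{(i)}_k$ and is supplied by the factorial basis method. As $N\to\infty$ the left-hand side converges to the sum on the left of \eqref{eq:RR_GF}, and by Theorem~\ref{thm:RR_q} that sum equals $1/\bigl((q^i;q^5)_\infty(q^{5-i};q^5)_\infty\bigr)$. Hence the whole task reduces to showing that the right-hand side converges to $\sum_{k\geq 0} c^{(i)}_k/(q;q)_k$, i.e. to interchanging the limit in $N$ with the summation over $k$.

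First I would compute the termwise limit. From \eqref{eq:q_binom}, for fixed $k$ and $N\geq k$,
\[
  \qbinom{N}{k} = \frac{1}{(q;q)_k}\prod_{j=N-k+1}^{N}\bigl(1-q^{j}\bigr),
\]
and since $|q|<1$ each of these $k$ factors tends to $1$, so $\qbinom{N}{k}\to 1/(q;q)_k$ as $N\to\infty$. This produces the claimed limit formally; the substance is to justify the interchange.

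The main obstacle is therefore the rigorous interchange, for which I would supply a summable bound on the terms that is uniform in $N$ and then invoke Tannery's theorem (dominated convergence for series). For the coefficients, the recurrence \eqref{eq:Shunsuke_RR_rec} gives $|c^{(i)}_k|\leq |q|^{k}|c^{(i)}_{k-1}| + |q|^{k-1}|c^{(i)}_{k-2}|$; a short induction shows the $|c^{(i)}_k|$ are uniformly bounded by some $B$, and feeding this back yields the geometric estimate $|c^{(i)}_k|\leq B(1+|q|)|q|^{k-1}$ for all large $k$, so that $\sum_k|c^{(i)}_k|<\infty$. For the basis factor, the product formula above gives $\bigl|\qbinom{N}{k}\bigr| \leq \prod_{j\geq 1}(1+|q|^j)\big/ \inf_k|(q;q)_k|$, a finite constant independent of $N$ (the numerator converges because $\sum_j|q|^j<\infty$, and the infimum is positive because $(q;q)_k\to(q;q)_\infty\neq 0$). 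Multiplying, each term $c^{(i)}_k\qbinom{N}{k}$ is dominated by a constant multiple of $|q|^{k-1}$, which is summable, so Tannery's theorem applies and
\[
  \lim_{N\to\infty} RR_i(N) = \sum_{k\geq 0} c^{(i)}_k \lim_{N\to\infty}\qbinom{N}{k} = \sum_{k\geq 0}\frac{c^{(i)}_k}{(q;q)_k}.
\]

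Combining this with the value of $\lim_{N\to\infty}RR_i(N)$ furnished by Theorem~\ref{thm:RR_q} gives \eqref{eq:Shunsuke_RR_thm}. I expect the convergence bookkeeping of the previous paragraph to be the only genuinely delicate point; for real $0<q<1$ it simplifies considerably, since then $0<\prod_{j=N-k+1}^N(1-q^j)<1$ yields the clean uniform bound $\qbinom{N}{k}\leq 1/(q;q)_k$ at once, and the same geometric decay of $c^{(i)}_k$ closes the argument.
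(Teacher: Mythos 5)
Your proposal is correct and takes essentially the same route as the paper, which obtains Theorem~\ref{thm:Shunsuke_RR} precisely by letting $N\to\infty$ in the expansion \eqref{eq:Shunsuke_Fin_sum} and invoking \eqref{eq:RR_GF}. The only difference is that you make explicit the convergence bookkeeping (the termwise limit $\qbinom{N}{k}\to 1/(q;q)_k$ and the Tannery/dominated-convergence justification of the interchange) that the paper leaves implicit.
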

Theorem~\ref{thm:Shunsuke_RR} was recently observed by Tsuchioka~\cite{tsuchioka2022fibonacci} for the first time through studying Kashiwara crystals. We independently discover this result by only using the original recurrence the bounded generating functions satisfy and the polynomial basis of our choice. Neither the statistic we bounded, nor the basis are unique choices. By picking another finitization of the generating functions and/or picking a different basis, we can discover similar results. 

Moreover, it is of interest to find a closed formula for $c^{(i)}_k$ to see~\eqref{eq:Shunsuke_RR_thm} as a $q$-series identity.  The recurrence~\eqref{eq:Shunsuke_RR_rec} is a $q$-analog of the well-known recurrence for Fibonacci numbers. The $q$-Fibonacci numbers were studied in the past \cite{carlitz1975fibonacci, andrews2004fibonacci, cigler2003q, cigler2004q}, some of which are in the context of Rogers--Ramanujan identities. We note a theorem by Cigler here~\cite[(4.2)]{cigler2003q}:
\begin{theorem}
\begin{equation}\label{eq:Cigler_formula} 
F_n(x,s,q) = \sum_{k=0}^{n-1}q^{k(k-1)}s^k x^{n-2k-1}\qbinom{n-k-1}{k}
\end{equation} satisfies $F_n(x,s,q)= x F_{n-1}(x,s,q) + s q^{n-3}F_{n-2}(x,s,q),$ with initial conditions $F_0(x,s,q)=0$ and $F_1(x,s,q)=1$. 
\end{theorem}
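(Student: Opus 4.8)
The plan is to prove Cigler's formula \eqref{eq:Cigler_formula} by verifying that the explicit sum satisfies the stated recurrence together with the two initial conditions; since a second-order recurrence with two initial values determines the sequence uniquely, this suffices. First I would check the base cases directly: for $n=0$ the sum $\sum_{k=0}^{-1}$ is empty, giving $F_0(x,s,q)=0$, and for $n=1$ the only term is $k=0$, yielding $q^0 s^0 x^0 \qbinom{0}{0}=1$, so $F_1(x,s,q)=1$. These match the claimed initial conditions.

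The main work is the induction step, where I would substitute the closed form into the right-hand side $x F_{n-1} + s q^{n-3} F_{n-2}$ and show it collapses to $F_n$. Writing $a_k := q^{k(k-1)} s^k x^{n-2k-1}$ for the coefficient appearing in $F_n$, I would express the $k$-th terms of $x F_{n-1}$ and $s q^{n-3} F_{n-2}$ in terms of shifts of $a_k$ and then combine the two $q$-binomial coefficients. Concretely, the term from $x F_{n-1}$ contributes $a_k \qbinom{n-k-2}{k}$, while the term from $s q^{n-3} F_{n-2}$, after matching the power of $q$, $s$, and $x$ to $a_k$ (which forces a reindexing $k \mapsto k-1$ in that sum), contributes $a_k \qbinom{n-k-2}{k-1}$. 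The heart of the proof is then the $q$-Pascal rule
\begin{equation}\label{eq:qPascal}
  \qbinom{n-k-1}{k} = \qbinom{n-k-2}{k} + q^{n-2k-1}\qbinom{n-k-2}{k-1},
\end{equation}
one of the two standard recurrences for the Gaussian binomial coefficient; I expect the powers of $q$ picked up in the reindexing to combine with the $q^{n-3}$ prefactor precisely so as to reproduce the factor $q^{n-2k-1}$ demanded by \eqref{eq:qPascal}.

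The step I expect to be the main obstacle is the bookkeeping of the $q$-powers: tracking how $q^{k(k-1)}$ transforms under the shift $k\mapsto k-1$ in the $F_{n-2}$ sum, and confirming that the residual power of $q$ together with the explicit $q^{n-3}$ coefficient lands exactly on $q^{n-2k-1}$, is where sign or exponent errors are most likely. I would also need to verify that the summation ranges align so that the two forms of the $q$-Pascal identity cover all terms without boundary leftovers; the extreme values of $k$ (where one of the binomial coefficients vanishes) must be checked to ensure the telescoping of ranges is clean. Once the exponent arithmetic is confirmed and the ranges are reconciled, applying \eqref{eq:qPascal} termwise yields $\sum_k a_k \qbinom{n-k-1}{k} = F_n(x,s,q)$, completing the induction.
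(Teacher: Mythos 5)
Your argument is correct, but there is nothing in the paper to compare it against: the paper states this theorem as an imported result of Cigler \cite[(4.2)]{cigler2003q} and gives no proof, so your induction is a self-contained verification rather than a variant of an internal argument. The step you flagged as an expectation does check out, and it is the crux: in the $s\,q^{n-3}F_{n-2}$ sum, the reindexing $k\mapsto k-1$ produces the $q$-exponent $(k-1)(k-2)+(n-3)=k(k-1)+(n-2k-1)$, so after factoring out $a_k=q^{k(k-1)}s^kx^{n-2k-1}$ the residual factor is exactly $q^{n-2k-1}$, which is what the $q$-Pascal rule
\[
\qbinom{n-k-1}{k}=\qbinom{n-k-2}{k}+q^{n-2k-1}\qbinom{n-k-2}{k-1}
\]
(the standard rule $\qbinom{m}{k}=\qbinom{m-1}{k}+q^{m-k}\qbinom{m-1}{k-1}$ at $m=n-k-1$) requires. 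The boundary bookkeeping is also clean under the paper's convention \eqref{eq:q_binom}: at $k=0$ the term absent from the reindexed sum carries $\qbinom{n-2}{-1}=0$; all terms with $2k>n-1$ vanish because the lower index exceeds the upper one; and in the first inductive case $n=2$ the negative power $q^{n-3}=q^{-1}$ is harmless since it multiplies $F_0=0$. With your base cases $F_0=0$ and $F_1=1$, the induction closes for all $n\geq 2$, and the proof is complete.
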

Although~\eqref{eq:Cigler_formula} is general, it is not possible to fit the recurrence in~\eqref{eq:Shunsuke_RR_rec} or some version of it after a substitution through this formula directly.  Instead, one can experiment with the factorial basis method on the sequence $c^{(i)}_k$ and discover a formula that way. This will lead to the following theorem:

\begin{theorem}\label{thm:Main_Shunsuke_thm} For  $k\geq 0$,
\begin{align}
\label{eq:2k_1}c^{(1)}_{2k}&= q^{k^2+k}\left(1- \qbinom{k}{1} - \sum_{l=1}^k\sum_{m=0}^{l-2} q^{l^2-lm+m^2+m}\qbinom{l-m-2}{m}\qbinom{k}{l+1}\right),\\
\label{eq:2kp1_1}c^{(1)}_{2k+1}&= q^{(k+1)^2}\left(1+ \sum_{l=1}^k\sum_{m=0}^{l-2} q^{l^2-lm+m^2-l+m}\qbinom{l-m-2}{m}\qbinom{k}{l}\right),\\
%c^{(1)}_{2k}&= q^{k^2+k}\left(1- \qbinom{k}{1} - \sum_{l=2}^k\sum_{m=0}^{l-3} q^{(l-1)^2-lm+(m+1)^2-1}\qbinom{l-m-3}{m}\qbinom{k}{l}\right),
\label{eq:2k_2}c^{(2)}_{2k}&= q^{k^2+k}\left(1+ \sum_{l=0}^k\sum_{m=0}^{l-1} q^{l^2-lm+m^2+m}\qbinom{l-m-1}{m}\qbinom{k}{l+1}\right),\\
\label{eq:2kp1_2} c^{(2)}_{2k+1}&= - q^{(k+1)^2}\sum_{l=1}^k\sum_{m=0}^{l-1} q^{l^2-lm+m^2-l+m}\qbinom{l-m-1}{m}\qbinom{k}{l}.
\end{align}
\end{theorem}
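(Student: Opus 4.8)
The plan is to prove the four identities by exploiting the fact that, for each fixed $i\in\{1,2\}$, the sequence $(c^{(i)}_n)_{n\ge 0}$ is uniquely pinned down by the second-order recurrence \eqref{eq:Shunsuke_RR_rec} together with the two initial values $c^{(i)}_0$ and $c^{(i)}_1$. Hence it suffices to show that the closed forms on the right-hand sides of \eqref{eq:2k_1}--\eqref{eq:2kp1_2}, which I will denote collectively by $C^{(i)}_n$ (reading $C^{(i)}_{2k}$ and $C^{(i)}_{2k+1}$ off the even- and odd-index formulas), reproduce these initial values and satisfy the same recurrence for all $n\ge 2$. The initial conditions are immediate: evaluating the even formulas at $k=0$ gives $C^{(i)}_0=1$ since every sum is empty and $\qbinom{0}{1}=0$, and evaluating the odd formulas at $k=0$ gives $C^{(1)}_1=q$ and $C^{(2)}_1=0$, matching $c^{(1)}_1=q$ and $c^{(2)}_1=0$.

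The substantive step is verifying
\[
C^{(i)}_n + q^{n}C^{(i)}_{n-1} - q^{n-1}C^{(i)}_{n-2}=0,\qquad n\ge 2.
\]
Because the closed forms are parametrized by the half-index, this naturally splits into two parities. For $n=2k$ the relation reads $C^{(i)}_{2k}=-q^{2k}C^{(i)}_{2(k-1)+1}+q^{2k-1}C^{(i)}_{2(k-1)}$, mixing one even- and two half-index-$(k-1)$ formulas; for $n=2k+1$ it reads $C^{(i)}_{2k+1}=-q^{2k+1}C^{(i)}_{2k}+q^{2k}C^{(i)}_{2(k-1)+1}$. In both cases the target on the left carries $q$-binomials of the form $\qbinom{k}{l+1}$ or $\qbinom{k}{l}$, while the terms produced by the recurrence carry $\qbinom{k-1}{\cdot}$. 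The bridge between the two is the pair of $q$-Pascal rules
\[
\qbinom{k}{l}=\qbinom{k-1}{l-1}+q^{l}\qbinom{k-1}{l}
=q^{k-l}\qbinom{k-1}{l-1}+\qbinom{k-1}{l},
\]
which I will use to expand each $\qbinom{k}{\cdot}$ on the left-hand side and then reorganize the resulting double sums so that they align with the recurrence contributions.

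Concretely, I would apply $q$-Pascal to the outer $q$-binomial, re-index the inner sum over $m$ so that the $(l,m)$ summation ranges of all three contributions coincide, and then check that, after factoring out the appropriate power of $q$, the quadratic $q$-exponents $q^{l^2-lm+m^2+\cdots}$ agree term by term following the shift. The differing overall prefactors ($q^{k^2+k}$ in the even formulas versus $q^{(k+1)^2}$ in the odd ones, related by $q^{(k+1)^2}=q^{k+1}q^{k^2+k}$) contribute explicit $q$-powers that must be folded into the inner exponents. The lone additive constants and the $-\qbinom{k}{1}$ appearing in \eqref{eq:2k_1}, together with the differing inner ranges ($0\le m\le l-2$ for $i=1$ versus $0\le m\le l-1$ for $i=2$), have to be tracked as separate boundary contributions rather than inside the generic summand.

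I expect the main obstacle to be precisely this bookkeeping: after the $q$-Pascal split each double sum breaks into pieces whose $(l,m)$ ranges and quadratic exponents must be realigned by index shifts, and the degenerate top and bottom terms of the $m$-sum, where $\qbinom{l-m-2}{m}$ or $\qbinom{l-m-1}{m}$ collapses, generate the stray low-order terms that must cancel against the explicit constants. This is routine but delicate. An efficient alternative, more in the spirit of the present paper, is to certify each of the four identities automatically by applying creative telescoping to the corresponding double sum, thereby reducing the proof to a finite verification that the closed form satisfies the very recurrence \eqref{eq:Shunsuke_RR_rec} and the initial data above.
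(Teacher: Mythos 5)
Your overall strategy---pin down $c^{(i)}_n$ by uniqueness from the recurrence \eqref{eq:Shunsuke_RR_rec} plus the two initial values, then verify that the closed forms satisfy both---is logically sound, and your initial-condition checks are correct. It is also genuinely different from the paper's proof, which never verifies the formulas directly: the paper \emph{derives} them, by decoupling the parities (second-order recurrences for $c^{(i)}_{2k}$ and $c^{(i)}_{2k+1}$ separately), pulling out the prefactors $q^{k^2+k}$ and $q^{(k+1)^2}$, expanding the normalized sequences in the $q$-binomial basis via the factorial basis method (which proves a fourth-order annihilator for the coefficients $\hat{c}^{(i)}_l$), reducing to the minimal recurrence $\hat{c}^{(i)}_{l+2}=q^{2l+1}\hat{c}^{(i)}_{l+1}+q^{3l+1}\hat{c}^{(i)}_l$, and finally recognizing this, after $\hat{c}^{(i)}_l = q^{-(l-1)^2}\bar{c}^{(i)}_l$, as Cigler's $q$-Fibonacci recurrence, so that \eqref{eq:Cigler_formula} supplies the inner sum over $m$. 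In the paper's route the formula arrives already proven; nothing remains to verify.

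As written, though, your main route has a genuine gap: the central verification is only sketched, and the claim that after applying $q$-Pascal to the outer binomial the quadratic exponents ``agree term by term following the shift'' is not accurate. Two layers of structure are hidden there. First, the recurrence coefficients $q^{2k}$, $q^{2k+1}$, combined with the mismatch of prefactors, leave $k$-dependent powers such as $q^{k}$ multiplying sums in $\qbinom{k-1}{j}$; these cannot be ``folded into the inner exponents'' (which are $k$-free) but must themselves be re-expanded in the basis, using your second Pascal rule in its $q^{k-l}$ form, before any coefficient comparison is legitimate. Second, once you do compare coefficients of $\qbinom{k-1}{j}$ (valid because these are linearly independent as sequences in $k$), the contributions coming from two consecutive values of $l$ carry \emph{different} inner binomials, $\qbinom{j-m-1}{m}$ versus $\qbinom{j-m-2}{m}$; equality holds not term by term but only after establishing a three-term recurrence for the inner $m$-sums---precisely the $q$-Fibonacci structure that the paper imports from Cigler. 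So a second, nested round of the same Pascal argument is required, and your plan does not identify it. Your fallback---certifying each formula by creative telescoping and checking initial values---does close the gap rigorously, and is in fact the alternative proof the paper itself endorses in the remark following the proof of Theorem~\ref{thm:Main_Shunsuke_thm2}; if you take that route, note that the formulas are \emph{double} sums, so you need a multi-sum (or iterated) $q$-Zeilberger algorithm, followed by a gcd and initial-value comparison against the parity-separated recurrences deduced from \eqref{eq:Shunsuke_RR_rec}.
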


We also note that it is possible to find formulas of these polynomial sequences \'{a} la Bressoud~\cite{bressoud1981some}.

\begin{theorem}\label{thm:Main_Shunsuke_thm2} For $\nu=0$ or 1, and  $k\geq 0$, we have
\begin{equation}\label{eq:C_explicit} c^{(i)}_{2k+\nu} = q^{k^2+k +\nu(k+1)}  \sum_{l=-\infty}^\infty \sum_{j=0}^4 q^{15l^2 }\alpha^{(i)}_{\nu,j} \qbinom{2k}{k+5l + j},\end{equation}
where 
\begin{align*}
\alpha^{(1)}_{0,0} &= q^{3l},\ \alpha^{(1)}_{0,1} = -q^{3l},\ \alpha^{(1)}_{0,2} = q^{13l+3},\ \alpha^{(1)}_{0,3} = 0,\ \alpha^{(1)}_{0,4} = -q^{23l+9},\\
\alpha^{(1)}_{1,0} &= q^{2l},\ \alpha^{(1)}_{1,1} = -q^{8l+1},\ \alpha^{(1)}_{1,2} =\alpha^{(1)}_{1,3} = \alpha^{(1)}_{1,4} = 0,\\
\alpha^{(2)}_{0,0} &= q^l,\ \alpha^{(2)}_{0,1} = -q^{9l+1},\ \alpha^{(2)}_{0,2} = q^{9l+1},\ \alpha^{(2)}_{0,3} = -q^{19l+6},\ \alpha^{(2)}_{0,4} = 0,\\
\alpha^{(2)}_{1,0} &= 0,\ \alpha^{(2)}_{1,1} = -q^{4l},\ \alpha^{(2)}_{1,2} = q^{14l+3},\ \alpha^{(2)}_{1,3} = 0,\ \alpha^{(2)}_{1,4} = 0.
\end{align*}
\end{theorem}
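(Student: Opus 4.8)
The plan is to prove \eqref{eq:C_explicit} by the uniqueness principle: since $c^{(i)}_n$ is completely determined by the second-order recurrence \eqref{eq:Shunsuke_RR_rec} together with the initial data $c^{(i)}_0=1$, $c^{(1)}_1=q$, $c^{(2)}_1=0$, it suffices to show that the right-hand side of \eqref{eq:C_explicit} — call it $R^{(i)}_{2k+\nu}$ — obeys the same recurrence and the same initial values. The initial conditions are immediate: at $k=0$ the only surviving term in the bilateral sum is $l=0$, $j=0$, because $\qbinom{0}{5l+j}$ vanishes unless $5l+j=0$ with $0\le j\le 4$. Reading off the prefactor $q^{k^2+k+\nu(k+1)}$ at $k=0$ and the listed values $\alpha^{(1)}_{0,0}|_{l=0}=\alpha^{(2)}_{0,0}|_{l=0}=\alpha^{(1)}_{1,0}|_{l=0}=1$ and $\alpha^{(2)}_{1,0}|_{l=0}=0$ reproduces exactly $R^{(i)}_0=1$, $R^{(1)}_1=q$, $R^{(2)}_1=0$.

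For the recurrence, the first point to confront is that \eqref{eq:Shunsuke_RR_rec} crosses parities: taking $n=2k$ relates $c_{2k}$ to $c_{2k-1}$ and $c_{2k-2}$, and taking $n=2k+1$ relates $c_{2k+1}$ to $c_{2k}$ and $c_{2k-1}$. Thus I would split the verification into the two cases $\nu=0,1$, obtaining a coupled first-order system linking the level-$k$ quantities ($\qbinom{2k}{\cdot}$) to the level-$(k-1)$ quantities ($\qbinom{2k-2}{\cdot}$). To put all binomials on a common level I would apply the $q$-Pascal rule twice, using
\[ \qbinom{2k}{r} = \qbinom{2k-2}{r-2} + (q^{r-1}+q^{r})\qbinom{2k-2}{r-1} + q^{2r}\qbinom{2k-2}{r}, \]
with $r=k+5l+j$, so that the three pieces land on the level-$(k-1)$ central columns of offset $j-1$, $j$, $j+1$. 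Since $j$ runs over $\{0,1,2,3,4\}$, the extreme offsets $-1$ and $5$ produced at the boundaries $j=0$ and $j=4$ would be folded back into $\{0,\dots,4\}$ using $5l\mp1 = 5(l\mp1)+4,\,0$, i.e. by shifting the summation index $l\mapsto l\mp1$ together with the corresponding change in the Gaussian weight $q^{15l^2}$.

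After this reduction both sides of the recurrence are bilateral sums of the single family $\qbinom{2k-2}{(k-1)+5l+j}$, and the remaining task is to show their difference vanishes. The hard part will be that this cancellation is \emph{not} termwise in $(l,j)$: collecting the coefficient of a fixed binomial leaves an expression whose $k$-dependence ($q^{k}$ versus $q^{2k}$ versus constant terms) does not cancel in place. The resolution I expect is to exploit the reflection symmetry $\qbinom{2k-2}{m}=\qbinom{2k-2}{2k-2-m}$, equivalently to pair each term of the bilateral sum with its $l\mapsto-l$ partner; the specific exponents appearing in the $\alpha^{(i)}_{\nu,j}$ (the $q^{15l^2}$ weight and the linear-in-$l$ shifts such as $q^{13l+3}$, $q^{23l+9}$) are precisely what make the folded and reflected contributions align. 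Once the pairing is set up, the identity collapses to a small finite list of elementary $q$-power identities among the $\alpha^{(i)}_{\nu,j}$, one family per residue class modulo $5$, which can be checked directly. As an alternative — and in the spirit of the factorial basis/creative-telescoping framework developed in this paper — one could instead feed the double sum on the right of \eqref{eq:C_explicit} to the $q$-analog of Zeilberger's algorithm, let it certify the linear recurrence that the sum satisfies, and confirm mechanically that it coincides with \eqref{eq:Shunsuke_RR_rec}; this bypasses the manual theta-bookkeeping at the cost of being computer-assisted. Either route establishes $R^{(i)}_{2k+\nu}=c^{(i)}_{2k+\nu}$, and in particular shows the two closed forms of Theorem~\ref{thm:Main_Shunsuke_thm} and Theorem~\ref{thm:Main_Shunsuke_thm2} agree as a byproduct.
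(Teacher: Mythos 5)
Your overall strategy---pin down $c^{(i)}_n$ by a determining recurrence plus initial values---is sound, and your initial-value check at $k=0$ is correct. But your primary (manual) route has a genuine gap exactly at its crux. After the double $q$-Pascal expansion and the folding $l\mapsto l\mp1$, you must show that the coefficient of each $\qbinom{2k-2}{k-1+5l+j}$ in the combination coming from \eqref{eq:Shunsuke_RR_rec} vanishes; you yourself note this is not termwise, and the only mechanism you offer is a reflection pairing, which does not hold up under inspection. Concretely, the reflection $\qbinom{2k}{k+5l+1}=\qbinom{2k}{k+5(-l-1)+4}$ would pair the $(l,j=1)$ term of the $i=1$, $\nu=0$ sum, whose coefficient is $-q^{15l^2+3l}$, with the $(-l-1,j=4)$ term, whose coefficient is $-q^{15(l+1)^2+23(-l-1)+9}=-q^{15l^2+7l+1}$; the exponents disagree, so the ``folded and reflected contributions'' do not align as you expect, and the claimed collapse to ``a small finite list of elementary $q$-power identities'' is left unsubstantiated. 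Since everything before this point is routine bookkeeping, the proof stops where the real work begins.

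Your fallback route (creative telescoping) is essentially what the paper does, but as stated it elides the step that makes it rigorous. For fixed $i$ and $\nu$, the $q$-Zeilberger algorithm applied to the bilateral sums in \eqref{eq:C_explicit} produces a recurrence in $k$ \emph{within one parity class}, and generically of order larger than two; it cannot literally ``coincide with'' \eqref{eq:Shunsuke_RR_rec}, which mixes the two parities. The paper instead deduces from \eqref{eq:Shunsuke_RR_rec} the parity-separated second-order recurrences satisfied by $c^{(i)}_{2k+\nu}$ (these appear in the proof of Theorem~\ref{thm:Main_Shunsuke_thm}), combines the telescoped recurrences of the individual sums via holonomic closure properties, and then shows by a gcd computation that the second-order operator right-divides the certified one---so that $c^{(i)}_{2k+\nu}$ also satisfies the certified recurrence---and finally matches as many initial values as the order of that recurrence demands (two are not enough in general). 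If you replace ``confirm it coincides with \eqref{eq:Shunsuke_RR_rec}'' by this gcd-plus-initial-values argument, your second route becomes a complete proof, and it is the paper's.
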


As mentioned in context, Theorems~\ref{thm:Main_Shunsuke_thm} and \ref{thm:Main_Shunsuke_thm2} provide new representations for the bounded generating functions $RR_i(N)$. %in \eqref{eq:2k_1}-\eqref{eq:2kp1_2} and \eqref{eq:Shunsuke_Fin_sum}. 
Moreover, plugging the formulas \eqref{eq:2k_1}-\eqref{eq:2kp1_2} and \eqref{eq:Shunsuke_Fin_sum} for $c^{(i)}_k $ in \eqref{eq:Shunsuke_RR_thm}, we get new sum-product representations of the Rogers--Ramanujan identities. These can lead to new $q$-theoretic proofs of the identities~\eqref{eq:RR_GF}, as well as new combinatorial insights into Theorem~\ref{thm:RR_comb}.

The outline of the paper is as follows. In Section~\ref{sec:compatibility} we introduce the concept of a compatible operator as described in~\cite{JimenezPastor2023} and its associated recurrence operator. In Section~\ref{sec:qbasis} we extend the theory in~\cite{JimenezPastor2023} for factorial basis to fit the setting with $q$-series, introducing the concept of $\beta(n)$-factorial basis.
Section~\ref{sec:product} extends the product bases from~\cite{JimenezPastor2023} to the new $\beta(n)$-factorial bases. We then proceed to Section~\ref{sec:example}, proving Theorem~\ref{thm:Main_Shunsuke_thm} by using the adapted factorial basis method for $q$-series, together with other examples. In Section~\ref{sec:conclusions} we summarize our results and state possible future applications for this technique.

\section{Compatible operators}\label{sec:compatibility}

The fundamentals of the factorial basis method outlined in ~\cite{JimenezPastor2023} are needed here. We quickly recite the necessary concepts and adapt them to $q$-calculus as needed.
%Let us recall the main definition of compatible operator w.r.t. a sequence basis from~\cite{JimenezPastor2023}.

\begin{definition}\label{def:compatibility}
    Let $\KK$ be a field of characteristic zero and $\cB = \{B_k(n)\}_{k\in \NN}$ be a $\KK$-basis of $\KK^\NN$. Let 
    $L: \KK^\NN \rightarrow \KK^\NN$ be a $\KK$-linear operator, i.e., $L(a(n)+b(n)) = L(a(n)) + L(b(n))$ and 
    $L(\alpha a(n)) = \alpha L(a(n))$ for all $a(n), b(n) \in \KK^\NN$ and $\alpha\in \KK$. 

    We say that $L$ is $(A,B)$-compatible with $\cB$ in $t$ sections if there are sequences $\alpha_{r,i}(n) \in \KK^\NN$
    for $r=0,\ldots,t-1$ and $i=-A,\ldots,B$ such that, for all $k=mt+r$ it holds:
    \begin{equation}\label{equ:compatible}L P_k(n) = \sum_{i=-A}^B \alpha_{r,i}(m)P_{k+i}(n).\end{equation}
\end{definition}

In~\cite{JimenezPastor2023}, this definition was mainly used with $\KK=\QQ$, and several bases of $\QQ^\NN$ and their compatibilities were studied.
In this article, we will take $q$-series, i.e., $\KK = \QQ(q)$ where $q$ is a transcendental element over $\QQ$. Thus,
we will use the following notation:

\begin{itemize}
    \item $\cB = \{B_k(n)\}_{k\in \NN}$ is a $\QQ(q)$-basis of $\QQ^\NN$.
    \item Given $y(n) \in \QQ(q)^\NN$ we denote by $\sigma_\cB(y)$ the sequence $c(k) \in \QQ(q)^\NN$ such that $y(n) = \sum_k c_k B_k(n)$.
    \item Given $L$ a $(A,B)$-compatible operator with $\cB$ in 1 section, we denote by $\cR_\cB(L)$ (or simply $\cR(L)$) the linear recurrence operator defined by
    $\cR(L) = \sum_{i=-B}^A \alpha_{0,-i}(k) S^i$, where $S$ is the forward shift operator $S(a(k)) = a(k+1)$.
    \item Let $L$ be a $(A,B)$-compatible operator with $\cB$ in $t$ sections, we denote by $\cR_\cB(L)$ (or simply $\cR(L)$) to the $t\times t$ matrix of recurrence 
    operators whose element at position $(i,j)$ can be written as
    \[L_{j,r} = \sum_{\begin{array}{c}-A\leq i \leq B \\ i+r \equiv j~(mod~t)\end{array}} \alpha_{r,i}\left(m+\frac{j-r-i}{t}\right)S^{\frac{j-r-i}{t}}.\]
    When $t=1$, the $1\times 1$ matrix coincides with the recurrence operator defined above.
\end{itemize}

The following results, which are a rephrasing of results in~\cite{JimenezPastor2023}, relate the compatibility concept to 
the problem of finding definite-sum solutions to recurrence equations.

\begin{theorem}[{cf.~\cite[Theorem~2]{JimenezPastor2023}}]
    Let $L$ be $(A,B)$-compatible with $\cB$ in 1 section. Then there is a 1-1 correspondence between solutions of the equation $Ly(n) = 0$ and 
    $\cR(L) c(k) = 0$. More precisely, $L y(n) = 0$ if and only if $\cR(L) \sigma_\cB(y)(k) = 0$.
\end{theorem}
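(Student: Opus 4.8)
The plan is to deduce both the bijection and the kernel equivalence from a single coefficient-comparison computation, using throughout that $\cB$ is a basis, so that $y(n) = \sum_k c_k B_k(n)$ and its coefficient sequence $c = \sigma_\cB(y)$ determine one another uniquely. First I would fix $y \in \QQ(q)^\NN$, set $c = \sigma_\cB(y)$, and apply $L$ term by term using its $\KK$-linearity, giving $Ly(n) = \sum_k c_k\, LB_k(n)$. I would first record that in the quasi-triangular setting at hand each expansion is locally finite (for fixed $n$ only finitely many $B_k(n)$ are nonzero), which is what justifies moving $L$ inside the sum and guarantees that $Ly$ again has a well-defined $\cB$-expansion.

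The core step is to insert the one-section compatibility relation $LB_k(n) = \sum_{i=-A}^B \alpha_{0,i}(k)\, B_{k+i}(n)$ and regroup the resulting double sum according to the index $j = k+i$ of each basis vector. Collecting the coefficient of $B_j(n)$ yields
\[
  Ly(n) = \sum_j \Bigl( \sum_{i=-A}^B \alpha_{0,i}(j-i)\, c_{j-i} \Bigr) B_j(n).
\]
I would then check that the inner coefficient is exactly $(\cR(L)\,c)(j)$ by unwinding the definition of the recurrence operator $\cR(L)$, so that the regrouped sum matches $\cR(L)c$ term by term.

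Since $\cB$ is a basis, a sequence is zero precisely when all of its $\cB$-coordinates vanish. Applying this to the displayed expansion gives $Ly = 0 \iff (\cR(L)c)(j) = 0$ for all $j \iff \cR(L)c = 0$, which is the asserted equivalence $Ly(n)=0 \Leftrightarrow \cR(L)\sigma_\cB(y)(k)=0$. The 1--1 correspondence is then formal: $\sigma_\cB$ is a bijection from $\QQ(q)^\NN$ onto the coefficient sequences (this is the meaning of $\cB$ being a basis), and the equivalence just established shows it restricts to a bijection between the solution space of $Ly=0$ and that of $\cR(L)c=0$, with inverse $c \mapsto \sum_k c_k B_k(n)$.

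The main obstacle I expect is bookkeeping rather than conceptual: aligning the regrouped coefficient with the definition of $\cR(L)$ requires care about the argument at which $\alpha_{0,i}$ is evaluated and about the boundary near $k=0$, where $B_{k+i}$ with $k+i<0$ must be read as $0$ and $c_k$ is undefined for $k<0$. To fix the argument unambiguously I would use the specialization of the $t\times t$ matrix formula to $t=1$, namely $\cR(L) = \sum_{i=-A}^B \alpha_{0,i}(m-i)\, S^{-i}$, which on evaluation at $m=j$ reproduces $\sum_i \alpha_{0,i}(j-i)\, c_{j-i}$ exactly; then I would confirm that the truncated ranges at the boundary on the two sides agree, so that the equivalence holds for every $j\ge 0$.
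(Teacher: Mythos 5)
Your proof is correct and is essentially the argument behind the cited result: the paper itself states this theorem without proof (deferring to \cite[Theorem~2]{JimenezPastor2023}), and the coefficient comparison you give --- expand $y$ in $\cB$, apply $L$ using linearity and the local finiteness of quasi-triangular expansions, insert the one-section compatibility relation, regroup by the basis index $j=k+i$, and invoke uniqueness of $\cB$-coordinates --- is precisely the standard derivation. Your handling of the evaluation-argument subtlety is also the right one, and is a genuine catch: the paper's displayed formula $\cR(L)=\sum_{i=-B}^{A}\alpha_{0,-i}(k)S^{i}$, read with the usual Ore convention, evaluates the coefficients at the unshifted index and is inconsistent with the $t\times t$ matrix formula, whereas your $t=1$ specialization $\sum_{i=-A}^{B}\alpha_{0,i}(m-i)S^{-i}$ is the version that actually matches the regrouped coefficient $\sum_{i}\alpha_{0,i}(j-i)c_{j-i}$, so the equivalence $Ly=0\iff\cR(L)\sigma_\cB(y)=0$ holds exactly as you argue.
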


\begin{theorem}[{cf.~\cite[Proposition~39]{JimenezPastor2023}}]
  Let $L$ be $(A,B)$-compatible with $\cB$ in $t$ sections. Let $s_j$ denote the operation of taking the $j$-section module $t$ of a sequence. Then
  \[s_j \sigma_\cB(L y) = \sum_{r=0}^{t-1} L_{j,r}s_r\sigma_\cB (y).\]
\end{theorem}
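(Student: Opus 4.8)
The plan is to compute $\sigma_\cB(Ly)$ directly: expand $y$ in the basis, apply $L$ term by term through the compatibility relation \eqref{equ:compatible}, and then read off the coefficient of each basis element whose index lies in a prescribed residue class modulo $t$. Set $c = \sigma_\cB(y)$, so that $y(n) = \sum_k c_k B_k(n)$. First I would use $\KK$-linearity of $L$ to write $Ly(n) = \sum_k c_k\, L B_k(n)$; this interchange is legitimate because the relevant (quasi-triangular) bases are such that $B_k(n)$ vanishes once $k$ exceeds a bound depending on $n$, so for each fixed $n$ the expansion of $y(n)$ is a finite sum and $L$ acts locally. I would then split $k$ by its residue modulo $t$, writing $k = at + r$ with $0\le r\le t-1$, and insert the compatibility relation to obtain
\[
Ly(n) = \sum_{r=0}^{t-1}\sum_{a} c_{at+r} \sum_{i=-A}^{B} \alpha_{r,i}(a)\, B_{at+r+i}(n).
\]

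The core of the argument is the coefficient extraction. Fixing a target residue $j$ and output index $m$, the value $s_j\sigma_\cB(Ly)(m)$ is the coefficient of $B_{mt+j}(n)$ on the right-hand side. Since the basis elements occurring there are $B_{at+r+i}$, such a term contributes exactly when $at + r + i = mt + j$. That single equation splits into the congruence $i + r \equiv j \pmod t$, which pins down the admissible values of $i$ in $\{-A,\dots,B\}$, together with the determination $a = m + \tfrac{j-r-i}{t}$. Substituting, I would get
\[
s_j\sigma_\cB(Ly)(m) = \sum_{r=0}^{t-1}\ \sum_{\substack{-A\le i\le B\\ i+r\equiv j\ (\mathrm{mod}\ t)}} \alpha_{r,i}\!\Big(m+\tfrac{j-r-i}{t}\Big)\, c_{(m+\frac{j-r-i}{t})t + r}.
\]

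The final step is recognition. The factor $c_{(m+\frac{j-r-i}{t})t+r}$ equals $(s_r c)\big(m+\tfrac{j-r-i}{t}\big) = S^{\frac{j-r-i}{t}}(s_r c)(m)$, and, paired with the multiplier $\alpha_{r,i}(m+\tfrac{j-r-i}{t})$, it is precisely one summand of $L_{j,r}$ acting on $s_r c$, exactly as $L_{j,r}$ is defined in the excerpt. Summing over the admissible $i$ gives $L_{j,r}(s_r c)(m)$, and summing over $r$ yields $\sum_{r=0}^{t-1} L_{j,r}(s_r\sigma_\cB(y))(m)$, which is the assertion. In particular, for $t=1$ this collapses to the single-section statement of the preceding theorem.

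I expect the only real difficulty to be the index bookkeeping rather than any conceptual hurdle: one must verify that $\tfrac{j-r-i}{t}$ is an integer on the admissible set (forced by $i+r\equiv j$), that the multiplier $\alpha_{r,i}$ and the shift $S$ are evaluated at the same argument $m+\tfrac{j-r-i}{t}$, and that the reindexing $a \mapsto m + \tfrac{j-r-i}{t}$ is a bijection on each residue class. A secondary point worth stating carefully is the boundary behaviour when $\tfrac{j-r-i}{t}<0$: the corresponding term refers to $s_r c$ at a smaller argument, so one should confirm that the quasi-triangularity conventions (basis elements vanishing below their support) make these contributions consistent with the operator $L_{j,r}$ as written.
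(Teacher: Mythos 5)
The paper offers no proof of this statement to compare against: it is presented explicitly as a rephrasing of \cite[Proposition~39]{JimenezPastor2023}, and the burden of proof is delegated to that reference. Judged on its own, your argument is correct and is the expected one: expand $y$ in the basis, split the summation index as $k=at+r$, insert the compatibility relation \eqref{equ:compatible}, extract the coefficient of $B_{mt+j}(n)$ by uniqueness of the basis expansion (the congruence $i+r\equiv j \pmod t$ forces $\frac{j-r-i}{t}\in\ZZ$ and $a=m+\frac{j-r-i}{t}$), and recognize the surviving terms as exactly the summands of $L_{j,r}$ applied to $s_r\sigma_\cB(y)$. The index bookkeeping in your display matches the paper's definition of $L_{j,r}$ term for term, and your treatment of negative arguments via the vanishing conventions is the standard resolution.

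The one step you should tighten is the interchange $L\bigl(\sum_k c_k B_k\bigr)=\sum_k c_k\,LB_k$. Definition~\ref{def:compatibility} only assumes $L$ is $\KK$-linear, and linearity licenses this for finite sums only; since the expansion $y=\sum_k c_k B_k$ is in general an infinite (pointwise convergent) sum, a bare linear operator on $\KK^\NN$ need not commute with it. Your justification conflates two things: quasi-triangularity (which makes $y(n)$ a finite sum for each fixed $n$, true but not by itself sufficient) and ``$L$ acts locally,'' which is an extra hypothesis, not a consequence of compatibility. What makes the step valid in this paper's setting is that the operators actually used are generated by the shift $E_n$ and multiplication by fixed sequences such as $q^n$, all of which are continuous for the product topology on $\KK^\NN$, so they pass through pointwise limits; equivalently, one takes such continuity as a standing assumption, as the cited source does. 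Flagging that hypothesis explicitly would close the only real gap in your write-up.
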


In particular, if $Ly = 0$, then $0 = \sum_{r=0}^{t-1} L_{j,r}c(kt+r)$ for all $j = 0, \ldots, t-1$, relating the recurrence on the original sequence $y(n)$ with 
system of $t$ equations in $t$ sections of the induces sequence $c(k)$.

These two results from~\cite{JimenezPastor2023} are the key tool used to compute definite-sum closed form solution to recurrence equations (as was shown in that article).
In this paper we will use these ideas in the context of $q$-series, studying some bases of $q$-series, in particular the $q$-binomial bases.

From a computational perspective, the main appeal of this technique comes from the fact that $\cR_\cB$ is a ring homomorphism (see~\cite[Thm.~24 and Prop.~42]{JimenezPastor2023}). 
Namely, if $L_1$ and $L_2$ are two compatible operators in $t$-sections, then
\begin{itemize}
  \item $L_1+L_2$ is also compatible in $t$-sections with associated operator $\cR(L_1+L_2) = \cR(L_1)+\cR(L_2)$.
  \item $L_1L_2$ is also compatible in $t$-sections with associated operator $\cR(L_1L_2) = \cR(L_1)\cR(L_2)$.
\end{itemize}
Hence, compatibilities can be computed for recurrence operators $\QQ(q, q^n)\langle E_n\rangle$ by simply studying the compatibilities of the operators $q^n$ (multiplication 
by the sequence $q^n$) and the shift operator $E_n: n\mapsto n+1$.

\section{\texorpdfstring{$\beta(n)$-factorial basis and $q$-binomial bases}{b(n)-factorial basis and q-binomial bases}}\label{sec:qbasis}

In~\cite{JimenezPastor2023}, factorial bases were studied. Those were bases $B_k(n)$ such that they satisfy a linear first order recurrence $B_{k+1}(n) = (a(k)n + b(k))B_k(n)$. In the 
context of $q$-series this is usually not enough: usually, the recurrences involve the sequence $q^n$ instead of simply $n$.

\begin{example}[$q$-binomial basis]
  Let us consider $B_k(n) = \qbinom{n}{k}$ as defined in~\eqref{eq:q_binom}. Hence, by construction, we have:
  \begin{align*}
    B_{k+1}(n) & = \frac{(q;q)_n}{(q;q)_{k+1}(q;q)_{n-k-1}} = \frac{1-q^{n-k}}{1-q^{k+1}}\frac{(q;q)_n}{(q;q)_{k}(q;q)_{n-k}} \\
               & = \frac{1-q^{n-k}}{1-q^{k+1}} B_k(n) = \left(-\frac{1}{q^k(1-q^{k+1})}q^n + \frac{1}{1+q^{k+1}}\right)B_k(n).
  \end{align*}
\end{example}

However, the similar structure between these recurrences and those studied in the factorial bases, allows us to adapt the theory from one to the other.

\begin{definition}\label{def:factorial}
  Let $\beta(n)$ be a $q$-sequence. We say that the set $\{B_k(n)\}$ is $\beta(n)$-factorial w.r.t. $k$ if there are sequences $a(k), b(k)$ such that 
  \[B_{k+1}(n) = (a(k)\beta(n) + b(k))B_k(n).\]
\end{definition}

When $\beta(n) = n$, this concept coincides with the definition of factorial basis in~\cite{JimenezPastor2023}. Moreover, we can see a $\beta(n)$-factorial basis 
as a polynomial ring of sequences embedded into the sequences $\QQ(q)^\NN$. More precisely, if we consider the sequence of polynomials $p_k(Y) \in \QQ(q)[Y]$ defined
by $p_{k+1}(Y) = (a(k)Y+b(k))p_k(Y)$, then the map $\mathcal{E}: \QQ(q)[Y] \rightarrow \QQ(q)^\NN$ defined by $\mathcal{E}(Y) = \beta(n)$ satisfies that
\begin{itemize}
  \item If $\beta(n) \in \QQ(q)^\NN$ is transcendental over $\QQ(q)$, $\mathcal{E}$ is injective.
  \item $\mathcal{E}(p_k(Y)) = B_k(n)$.
\end{itemize}

Hence, the set $\{B_k(n)\}$ is linearly independent set. Moreover, these sets enjoy a very simple compatibility condition by multiplication by $\beta(n)$:

\begin{lemma}
  Let $\{B_k(n)\}_{k\in\NN}$ be a $\beta(n)$-factorial basis. Then the linear operator $X: \QQ(q)^\NN \rightarrow \QQ(q)^\NN$ defined by $X(a(n)) = a(n)\beta(n)$ is $(0,1)$-compatible in 1 section.
\end{lemma}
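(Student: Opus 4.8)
The plan is to unwind the definition of $(0,1)$-compatibility in $1$ section and match it directly against the defining recurrence of the basis. With $t=1$, $A=0$, and $B=1$, Definition~\ref{def:compatibility} requires sequences $\alpha_{0,0}(k),\alpha_{0,1}(k)\in\QQ(q)^\NN$ such that, for every $k\in\NN$,
\[
X B_k(n) \;=\; \alpha_{0,0}(k)\,B_k(n) + \alpha_{0,1}(k)\,B_{k+1}(n),
\]
where $X B_k(n) = \beta(n)B_k(n)$ by definition of $X$. So the whole task reduces to expressing the product $\beta(n)B_k(n)$ inside the two-term span $\{B_k(n),B_{k+1}(n)\}$, with no lower term appearing since $A=0$.

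First I would invoke the $\beta(n)$-factorial hypothesis, which supplies sequences $a(k),b(k)$ with $B_{k+1}(n) = (a(k)\beta(n)+b(k))B_k(n)$. Solving this linear relation for $\beta(n)B_k(n)$ gives
\[
\beta(n)B_k(n) \;=\; \frac{1}{a(k)}\,B_{k+1}(n) \;-\; \frac{b(k)}{a(k)}\,B_k(n),
\]
from which I would read off $\alpha_{0,1}(k)=1/a(k)$ and $\alpha_{0,0}(k)=-b(k)/a(k)$. Because $X$ is already $\QQ(q)$-linear, verifying \eqref{equ:compatible} on each basis element is all that is needed, and the displayed identity does exactly that.

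The only genuine obstacle, and the point that must be argued rather than assumed, is the invertibility of $a(k)$: the step above divides by $a(k)$, so I need $a(k)\neq 0$ for all $k$. This is where the word \emph{basis} in the hypothesis does real work. Under the embedding $\mathcal{E}\colon\QQ(q)[Y]\to\QQ(q)^\NN$ with $\mathcal{E}(Y)=\beta(n)$, the recurrence $p_{k+1}(Y)=(a(k)Y+b(k))p_k(Y)$ forces $\deg_Y p_k = k$ precisely when every $a(k)$ is nonzero; if some $a(k)$ vanished, the degrees would stall and $\{\mathcal{E}(p_k)\}=\{B_k(n)\}$ could not form a $\QQ(q)$-basis. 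Thus the assumption that $\cB$ is a $\beta(n)$-factorial \emph{basis} guarantees that $a(k)$ is invertible in $\QQ(q)^\NN$, that the coefficients $\alpha_{0,0}$ and $\alpha_{0,1}$ are well-defined sequences, and hence that $X$ is $(0,1)$-compatible in a single section.
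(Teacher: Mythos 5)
Your proof is correct and takes essentially the same route as the paper: rearrange the defining recurrence $B_{k+1}(n) = (a(k)\beta(n)+b(k))B_k(n)$ to isolate $\beta(n)B_k(n) = \frac{1}{a(k)}B_{k+1}(n) - \frac{b(k)}{a(k)}B_k(n)$ and read off the compatibility coefficients. Your additional argument that $a(k)\neq 0$ (via the degree count under the embedding $\mathcal{E}$, using that $\cB$ is a basis) makes explicit a nonvanishing assumption the paper's proof leaves implicit, which is a welcome refinement but not a different approach.
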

\begin{proof}
  By definition, we know that there are sequences $a(k)$ and $b(k)$ such that $B_{k+1}(n) = (a(k)\beta(n) + b(k))B_{k}(n)$, so, rearranging terms we have
  \[X (B_k(n)) = \beta(n)B_k(n) = \frac{1}{a(k)} B_{k+1}(n) - \frac{b(k)}{a(k)} B_k(n).\]
\end{proof}

\begin{example}[$q$-Power Basis]\label{exm:power}
  Let $e\in \NN$ and consider the $q$-power basis $\cP^{(e)} = \{q^{ekn}\}_k$. This set is $(q^{en})$-factorial, where $a(k) = 1$ and $b(k) = 0$. Hence, $\cP$ is $(0,1)$-compatible with the multiplication by $q^{en}$ leading to the recurrence operator $\cR_{\cP^{(e)}}(q^{en}) = S_k^{-1}$.
\end{example}

As we have remarked before, in a $\beta(n)$-factorial basis the sequence $B_k(n)$ is always a polynomial of degree $k$ w.r.t. $\beta(n)$ (by taking $p_k(Y) = \mathcal{E}^{-1}(B_k(n))$). This polynomial
has $k$ roots in $\QQ(q)$. By construction, the roots of $p_k(Y)$ are the roots of $p_{k-1}(Y)$ together with a new root built from $a(k)$ and $b(k)$. More precisely, for $k \geq 1$, the new root of 
$p_k(Y)$ is $\rho(k-1) = -b(k-1)/a(k-1)$.

Hence, we can define a $\beta(n)$-factorial basis by defining its root sequence and providing a new sequence $c(k)$ for the leading coefficient of the polynomial 
w.r.t. $\beta(n)$. 

\begin{example}[$q$-Falling Basis]\label{exm:falling}
  Let us consider the $q^n$-factorial basis defined by the root sequence $\rho(k) = q^{k}$ and the leading coefficient sequence $c(k) = 1$. 
  This induced basis $\cF = \{f_k(n)\}_k$ (which we will call $q$-falling basis) can be written as
  \[f_k(n) = \prod_{i=0}^{k-1} (q^n - q^{i}).\]
  This basis satisfies a straightforward compatibility with the multiplication by $q^n$. Following the notation on Definition~\ref{def:factorial}, 
  the associated sequences $a(k)$ and $b(k)$ to the basis $\cF$ are $a(k) = 1$ and $b(k) = q^{k}$. Hence, $q^n f_k(n) = f_{k+1}(n) + q^{k}f_k(n)$,
  leading to
  \[\cR_\cF(q^n) = S_k^{-1} + q^k.\]

\end{example}
This approach using the root sequence allows proving compatibility by a shift operator using only the associated root sequence $\rho(k)$.

\begin{lemma}[{cf.~\cite[Proposition~11]{JimenezPastor2023}}]\label{lem:char}
  Let $\cB$ be a $\beta(n)$-factorial basis and let $p_k(Y)\in \QQ(q)[Y]$ be such that $B_k(n) = p_k(\beta(n))$. Then $L$ is $(A,B)$-compatible with $\cB$ if
  and only if
  \begin{enumerate}
    \item For all $k \in \NN$, $L B_k(n) \in \QQ(q)[\beta(n)]$. Let us denote $q_k(Y)$ the polynomial such that $L B_k(n) = q_k(\beta(n))$.
    \item For all $k \in \NN$, $deg_{Y} q_k(Y) \leq k+B$.
    \item For all $k \in \NN$, $p_{k-A}(Y)$ divides $q_k(n)$.
  \end{enumerate}
\end{lemma}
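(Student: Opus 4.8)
The plan is to exploit the nested product structure of a $\beta(n)$-factorial basis, namely that $p_{k+1}(Y) = (a(k)Y + b(k))p_k(Y)$, so that $p_j(Y)$ divides $p_{j'}(Y)$ whenever $j \leq j'$, and to translate the defining compatibility relation \eqref{equ:compatible} into a statement purely about polynomials in $Y = \beta(n)$ through the ring embedding $\mathcal{E}$. Throughout I use two facts recorded earlier: that $\mathcal{E}(p_k(Y)) = B_k(n)$, and that $\{p_k(Y)\}_{k\in\NN}$ is a $\QQ(q)$-vector-space basis of $\QQ(q)[Y]$, since the $p_k$ have pairwise distinct degrees $0,1,2,\ldots$ with nonzero leading coefficients (as $a(k)\neq 0$). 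The equivalence is then proved in the two directions.

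For the forward direction, assume $L$ is $(A,B)$-compatible in one section, so that $L B_k(n) = \sum_{i=-A}^B \alpha_{0,i}(k) B_{k+i}(n) = \sum_{i=-A}^B \alpha_{0,i}(k) p_{k+i}(\beta(n))$. The right-hand side is manifestly the evaluation at $\beta(n)$ of the polynomial $q_k(Y) := \sum_{i=-A}^B \alpha_{0,i}(k) p_{k+i}(Y)$, establishing (1). Its degree is at most that of its top summand $p_{k+B}$, namely $k+B$, giving (2). Finally, every index $k+i$ occurring satisfies $k+i \geq k-A$, so $p_{k-A}(Y) \mid p_{k+i}(Y)$ for each $i$, and hence $p_{k-A}(Y) \mid q_k(Y)$, giving (3).

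For the converse, assume (1)--(3). By (3) I may factor $q_k(Y) = p_{k-A}(Y)\, g_k(Y)$, and by (2) together with $\deg p_{k-A} = k-A$ I obtain $\deg g_k \leq A+B$. The crux is to expand $g_k$ in the correct basis: set $h_j(Y) := p_{k-A+j}(Y)/p_{k-A}(Y) = \prod_{l=k-A}^{k-A+j-1}(a(l)Y + b(l))$ for $j = 0,\ldots,A+B$. Since these $A+B+1$ polynomials have distinct degrees $0,1,\ldots,A+B$, they form a $\QQ(q)$-basis of the space of polynomials of degree $\leq A+B$, so I can write $g_k(Y) = \sum_{j=0}^{A+B} \alpha_{0,j-A}(k)\, h_j(Y)$ uniquely. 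Multiplying through by $p_{k-A}(Y)$ and reindexing $i = j-A$ yields $q_k(Y) = \sum_{i=-A}^B \alpha_{0,i}(k)\, p_{k+i}(Y)$; applying $\mathcal{E}$ recovers exactly \eqref{equ:compatible} in one section, and the coefficients $\alpha_{0,i}(k)$ so produced are well-defined sequences in $k$, as required.

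The main obstacle is the converse, specifically the claim that $\{h_0,\ldots,h_{A+B}\}$ spans every polynomial of degree $\leq A+B$: this is precisely where conditions (2) and (3) must work together, since (3) furnishes the factorization by $p_{k-A}$ while (2) caps $\deg g_k$ at exactly the dimension $A+B+1$ of that span, so that no degrees are wasted and the count matches. I would additionally address the boundary indices $k < A$, where $p_{k-A}$ is not literally defined; there the natural reading (following the convention $B_j = 0$ for $j<0$) makes the low-index summands drop out, and the argument goes through verbatim with a truncated range, which I would state explicitly to keep the correspondence with \eqref{equ:compatible} exact.
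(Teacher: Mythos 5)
Your proposal is correct, and the forward direction coincides with the paper's: expanding $LB_k(n)$ as a combination of $p_{k-A}(Y),\ldots,p_{k+B}(Y)$ immediately yields conditions (1)--(3). The converse, however, is organized differently. The paper expands $q_k(Y) = \sum_{i=0}^{k+B}c_i(k)p_i(Y)$ in the full polynomial basis (condition (2) giving the truncation at $k+B$), and then uses condition (3) to kill the low-order coefficients: $p_{k-A}(Y)$ divides both $q_k(Y)$ and the tail $\sum_{i=k-A}^{k+B}c_i(k)p_i(Y)$, hence divides the head $\sum_{i=0}^{k-A-1}c_i(k)p_i(Y)$, which has degree strictly less than $k-A$ and is therefore zero. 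You instead factor $q_k(Y) = p_{k-A}(Y)g_k(Y)$ first (condition (3)), cap $\deg g_k \leq A+B$ (condition (2)), and expand $g_k$ in the auxiliary basis of partial products $h_j(Y) = p_{k-A+j}(Y)/p_{k-A}(Y)$ before multiplying back. Both routes rest on exactly the same two facts---the nested divisibility $p_k \mid p_K$ for $k \leq K$ and degree counting---and produce the same coefficient sequences; yours avoids the coefficient-vanishing step at the cost of introducing the basis $\{h_0,\ldots,h_{A+B}\}$ (whose validity needs $a(l)\neq 0$, which you correctly note is implicit in $\cB$ being a basis). A small additional merit of your write-up is that it explicitly treats the boundary indices $k < A$, where $p_{k-A}$ is undefined; the paper's proof, and indeed its statement of condition (3), silently glosses over this.
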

\begin{proof}
  First, let us assume that $L$ is a $(A,B)$-compatible operator with $\cB$. Then, the application of $L$ to $B_k(n)$ is a linear combination of $B_{k-A}(n),\ldots,B_{k+B}(n)$. In terms
  of polynomials in $\beta(n)$, this is the same as being a linear combination of $p_{k-A}(Y),\ldots,p_{k+B}(Y)$. Hence,
  \begin{itemize}
    \item $L B_k(n)$ is a polynomial in $\beta(n)$.
    \item The degree of $q_k(Y)$ is at most $\deg_Y(p_{k+B}(Y)) = k+B$.
    \item Since $\cB$ is $\beta(n)$-factorial, then $p_k(Y)$ divides $p_K(Y)$ for all $k \leq K$. Hence, $p_{k-A}(Y)$ divides $q_k(Y)$.
  \end{itemize}

  Now, assume the three points in the statement hold for $L$ and write $q_k(Y) = \sum_{i=0}^{k+B}c_i(k) p_i(Y)$ (since the polynomials $p_k(Y)$ form a basis of polynomials). Since
  $p_{k-A}(Y)$ divides $q_k(Y)$ by hypothesis and also divides $\sum_{i=k-A}^{k+B} c_i(k)p_i(Y)$, then it also divides the polynomial $\sum_{i=0}^{k-A-1}c_i(k) p_i(Y)$. Since the degree 
  of this polynomial is strictly smaller than the degree of $p_{k-A}(Y)$, we conclude it is the zero polynomial, i.e., $c_i(k) = 0$ for $i=0,\ldots, k-A-1$.
  Hence, 
  \[q_k(Y) = \sum_{i=k-A}^{k+B}c_i(k) p_i(Y) = \sum_{i=-A}^Bc_{k+i}(k) p_{k+i}(Y),\]
  showing that $L$ is $(A,B)$-compatible with $\cB$.
\end{proof}

\begin{lemma}\label{lem:char_E_root}
  Let $\cB = \{B_k(n)\}_k$ be a $\beta(n)$-factorial basis with root sequence $\rho(k)$. Let $E: \QQ(q)^\NN \rightarrow \QQ(q)^\NN$ be a ring homomorphism such that
  $E\cdot \beta(n) = \gamma \beta(n) + \nu$ for $\gamma,\nu \in \QQ(q)$. Then $E$ is $(A,0)$-compatible with $\cB$ if and only if for all $n\in \NN$ the following multiset 
  inclusion holds:
  \[\left[\gamma\rho(0) + \nu, \ldots, \gamma\rho(n) + \nu\right] \subseteq [\rho(0),\ldots,\rho(n+A)].\]
\end{lemma}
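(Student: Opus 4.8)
The plan is to invoke the characterization of compatibility in Lemma~\ref{lem:char} with $B=0$ and reduce the whole statement to a claim about the roots of the polynomials $p_k$. First I would observe that, because $E$ is a $\QQ(q)$-linear ring homomorphism and $B_k(n)=p_k(\beta(n))$ with $p_k\in\QQ(q)[Y]$, applying $E$ coefficientwise gives
\[E B_k(n) = p_k(E\cdot\beta(n)) = p_k(\gamma\beta(n)+\nu).\]
Thus $E B_k(n)$ is automatically a polynomial in $\beta(n)$, namely $q_k(Y):=p_k(\gamma Y+\nu)$, which settles condition~(1) of Lemma~\ref{lem:char}; condition~(2) is immediate as well, since $\deg_Y q_k \le \deg_Y p_k = k$ (the substitution $Y\mapsto \gamma Y+\nu$ cannot raise the degree). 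Hence $E$ is $(A,0)$-compatible with $\cB$ \emph{if and only if} condition~(3) holds: $p_{k-A}(Y)$ divides $q_k(Y)$ for every $k$.

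Next I would translate this divisibility into the language of root multisets. Writing $p_k(Y)=c(k)\prod_{j=0}^{k-1}(Y-\rho(j))$ in terms of the root sequence, and assuming $\gamma\neq 0$, I get
\[q_k(Y) = c(k)\gamma^{k}\prod_{j=0}^{k-1}\Bigl(Y-\tfrac{\rho(j)-\nu}{\gamma}\Bigr),\]
so $q_k$ has nonzero leading coefficient and root multiset $\bigl[\tfrac{\rho(0)-\nu}{\gamma},\ldots,\tfrac{\rho(k-1)-\nu}{\gamma}\bigr]$, while $p_{k-A}$ has root multiset $[\rho(0),\ldots,\rho(k-A-1)]$. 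Since both polynomials are nonzero, divisibility $p_{k-A}\mid q_k$ is equivalent to the multiset inclusion $[\rho(0),\ldots,\rho(k-A-1)]\subseteq\bigl[\tfrac{\rho(0)-\nu}{\gamma},\ldots,\tfrac{\rho(k-1)-\nu}{\gamma}\bigr]$.

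Then I would push this inclusion forward through the affine map $\phi(Y)=\gamma Y+\nu$. As $\phi$ is a bijection of $\QQ(q)$ (this is exactly where $\gamma\neq 0$ is used), it preserves multiset inclusion, converting the previous line into
\[[\gamma\rho(0)+\nu,\ldots,\gamma\rho(k-A-1)+\nu]\subseteq[\rho(0),\ldots,\rho(k-1)].\]
Finally I would reindex with $n=k-A-1$, so that $k-1=n+A$; as $k$ ranges over $\{A+1,A+2,\ldots\}$ this yields precisely $[\gamma\rho(0)+\nu,\ldots,\gamma\rho(n)+\nu]\subseteq[\rho(0),\ldots,\rho(n+A)]$ for all $n\in\NN$. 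For $k\le A$ the constraint is vacuous, since the divisor $p_{k-A}$ is a nonzero constant (or absent), so no inclusion conditions are lost in the reindexing.

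The hard part will be the careful bookkeeping of multiplicities in the divisibility-to-inclusion step: divisibility in $\QQ(q)[Y]$ is genuinely a statement about root \emph{multisets}, not merely root sets, so I must verify that the translation and the pushforward by $\phi$ both respect multiplicities, and that the comparison of leading coefficients causes no trouble (it does not, both being nonzero). A secondary subtlety is the degenerate case $\gamma=0$, in which $q_k$ collapses to the constant $p_k(\nu)$ and $\phi$ is no longer a bijection; this case breaks the clean root-comparison argument and should either be excluded or analyzed separately, which is consistent with the intended applications where $E$ behaves like a shift and $\gamma\neq 0$.
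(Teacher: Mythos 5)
Your proof is correct and follows essentially the same route as the paper's: both reduce via Lemma~\ref{lem:char} to the divisibility $p_{k-A}(Y)\mid q_k(Y)$ with $q_k(Y)=p_k(\gamma Y+\nu)$, translate that divisibility into a root-multiset inclusion, push it through the affine map $Y\mapsto\gamma Y+\nu$, and reindex. Yours is in fact slightly more thorough, since you explicitly verify conditions (1) and (2) of Lemma~\ref{lem:char} (needed for the converse direction) and flag the degenerate case $\gamma=0$, both of which the paper's proof passes over in silence.
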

\begin{proof}
  Following the notation on Lemma~\ref{lem:char}, the operator $E$ acts on $\QQ(q)[Y]$ by the rule $E(Y) = \gamma Y + \nu$. Hence, $q_k(Y) = p_k(\gamma Y+ \nu)$.

  If $E$ is $(A,B)$-compatible with $\cB$, then $p_{k-A}(Y)$ divides $q_k(Y)$, so all the roots of $p_{k-A}(Y)$ must appear in $q_k(Y)$, implying the following inclusion of multisets:
  \[[\rho(0),\ldots,\rho(k-A-1)] \subseteq \left[\frac{\rho(0)-\nu}{\gamma},\ldots,\frac{\rho(k-1)-\nu}{\gamma}\right],\]
  which is equivalent to the inclusion of multisets
  \[[\gamma\rho(0) + \nu, \ldots, \gamma\rho(k-A-1) + \nu] \subseteq [\rho(0),\ldots,\rho(k-1)].\]
  Finally, since this holds for all $k \geq A$, then we have for all $k \in \NN$:
  \[[\gamma\rho(0) + \nu, \ldots, \gamma\rho(k) + \nu] \subseteq [\rho(0),\ldots,\rho(k+A)].\vspace{-4mm}\]
\end{proof}

\begin{example}[Continuation of Example~\ref{exm:power}]\label{exm:power:2}
  Recall the $q$-power basis defined in Example~\ref{exm:power} was $(q^{en})$-factorial with $a(k) = 1$ and $b(k) = 0$. Then, the root sequence is $\rho(k) = 0$ and, using Lemma~\ref{lem:char_E_root}, $\cP^{(e)}$ is clearly $(0,0)$-compatible with $E$:
  \[E  q^{ekn} = q^{ek(n+1)} = q^{ek} q^{ekn}.\]
  Obtaining the following associated recurrence $\cR_{\cP^{(e)}}(E) = q^k$.
\end{example}

\begin{example}[Continuation of Example~\ref{exm:falling}]
  For the $q$-Falling basis $\cF$, we know its root sequence is $\rho(k) = q^{k}$. Then, using Lemma~\ref{lem:char_E_root}, $\cF$ is $(1,0)$-compatible with the shift operator $E$. Namely,
  \begin{align*}
    E f_k(n) & = f_k(n+1) = \prod_{i=0}^{k-1} (q^{n+1} - q^i) = q^{k-1}(q^{n+1}-1) f_{k-1}(n) \\
    & = q^{k-1} (f_{k-1}(n)q^nq - f_{k-1}(n)) \\
    & = q^{k-1} (q^{k}-1) f_{k-1}(n) + q^kf_k(n).
  \end{align*}
  Then the recurrence operator associated with $E$ w.r.t. $\cF$ is $\cR_\cF(E) = q^{k} + q^k(q^{k+1} -1)S_k$.
\end{example}

We will study now the compatibilities associated with the $q$-binomial basis. The following result shows a generic result for a family of $q$-binomial bases.

\begin{lemma}\label{lem:qbinom_factorial}
  Let us define the sequence $\cC(a,c;t;e) = \left\{\qbinom[q^e]{an+c}{k+t}\right\}_k$ for $a, c, t, e\in \ZZ$. Then $\cC(a,c;t;e)$ is $q^{aen}$-factorial.
\end{lemma}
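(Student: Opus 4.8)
\section*{Proof proposal}

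The plan is to verify Definition~\ref{def:factorial} directly. Writing $B_k(n) = \qbinom[q^e]{an+c}{k+t}$ and $\beta(n) := q^{aen}$, I will compute the ratio $B_{k+1}(n)/B_k(n)$ of consecutive basis elements and exhibit it as an affine-linear expression $a(k)\beta(n)+b(k)$ whose coefficients depend only on $k$ (and on the fixed integers $a,c,t,e$). This is exactly the shape demanded by the definition, so once the ratio is in this form the lemma follows.

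The key computational step is the standard $q$-Pochhammer cancellation for consecutive lower indices. Writing $Q = q^e$ and treating $N=an+c$ as the formal upper argument, the factorization
\[\qbinom[Q]{N}{K} = \prod_{j=0}^{K-1}\frac{1-Q^{N-j}}{1-Q^{j+1}}\]
telescopes to give
\[\frac{\qbinom[Q]{N}{K+1}}{\qbinom[Q]{N}{K}} = \frac{1-Q^{N-K}}{1-Q^{K+1}}.\]
Substituting $Q=q^e$, $N=an+c$, $K=k+t$ and separating the $n$-dependence via $q^{e(an+c-k-t)} = q^{e(c-k-t)}q^{aen} = q^{e(c-k-t)}\beta(n)$ yields
\[B_{k+1}(n) = \bigl(a(k)\beta(n)+b(k)\bigr)B_k(n),\qquad a(k) = \frac{-q^{e(c-k-t)}}{1-q^{e(k+t+1)}},\quad b(k) = \frac{1}{1-q^{e(k+t+1)}}.\]
This is precisely the recurrence of Definition~\ref{def:factorial}, establishing that $\cC(a,c;t;e)$ is $q^{aen}$-factorial. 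As a consistency check, the specialization $a=1$, $c=t=0$, $e=1$ returns the coefficients computed for the $q$-binomial basis $\qbinom{n}{k}$ at the start of Section~\ref{sec:qbasis}; moreover the associated root sequence $\rho(k) = -b(k)/a(k) = q^{e(k+t-c)}$ agrees with the observation that the $j$-th factor of $\qbinom[q^e]{an+c}{k+t}$ vanishes at $\beta(n)=q^{e(j-c)}$.

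I do not expect a substantive obstacle: the heart of the argument is a one-line $q$-binomial identity. The only point deserving care is that $a(k)$ and $b(k)$ be genuine elements of $\QQ(q)$, which requires the denominator $1-q^{e(k+t+1)}$ to be nonzero for every $k\in\NN$. Since $q$ is transcendental over $\QQ$, this can fail only when $e(k+t+1)=0$, so it holds automatically in the parameter range relevant to the applications (for instance $e\neq 0$ together with $t\geq 0$). In that range $a(k)$ is a nonzero power of $q$ divided by a nonzero quantity, so each step genuinely introduces a new root and the factorial structure is non-degenerate.
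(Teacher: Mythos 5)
Your proposal is correct and follows essentially the same route as the paper: both compute the ratio $B_{k+1}(n)/B_k(n) = \frac{1-q^{e(an+c-k-t)}}{1-q^{e(k+t+1)}}$ from the defining product of the $q$-binomial and split off the $n$-dependence through $q^{e(an+c-k-t)} = q^{e(c-k-t)}q^{aen}$, arriving at exactly the coefficients $a(k) = \frac{-q^{e(c-k-t)}}{1-q^{e(k+t+1)}}$ and $b(k) = \frac{1}{1-q^{e(k+t+1)}}$ given in the paper. Your added remarks (the consistency check against the $q$-binomial example, the root sequence, and the caveat that $e(k+t+1)\neq 0$ is needed for the denominators to be nonzero) are sensible but go beyond what the paper records; they do not change the argument.
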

\begin{proof}
  Let $C_k(n) = \qbinom[q^e]{an+c}{k+t}$. By definition of the $q$-binomial, we have that 
  \begin{align*}
    C_{k+1}(n) & = \qbinom[q^e]{an+c}{k+t+1} = C_{k}(n)\frac{1-q^{e(an+c-k-t)}}{1-q^{e(k+t+1)}}\\
               & = C_{k}(n) \left(q^{aen}\frac{-q^{e(c-k-t)}}{1-q^{e(k+t+1)}} + \frac{1}{1-q^{e(k+t+1)}}\right),
  \end{align*}
  which shows that $\cC(a,c;m;e)$ is $q^{aen}$-factorial with coefficients 
  \[a(k) = \frac{-q^{e(c-k-t)}}{1-q^{e(k+t+1)}},\quad b(k) = \frac{1}{1-q^{e(k+t+1)}}.\vspace{-3mm}\]
\end{proof}

From this proof, we can see that the root sequence for the basis $\cC(a,c;t;e)$ is $\rho(k) = -b(k)/a(k) = q^{e(k+t-c)}$. Moreover, we can see that
$E: n\mapsto n+1$ acts nicely over $q^{aen}$, since $E \cdot q^{aen} = q^{ae}q^{aen}$. We can now use Lemma~\ref{lem:char_E_root} to show that $E$ is always
$(a,0)$-compatible with the generic $q$-binomial basis.

\begin{lemma}[Compatibility with shift]\label{lem:qbinom_comp_E}
  Let us denote with $C_k(n) = \qbinom[q^e]{an+c}{k+t}$. Then $E: n \mapsto (n+1)$ is $(a,0)$-compatible with $\cC(a,c;m;e)$.
\end{lemma}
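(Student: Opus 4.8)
The plan is to invoke the root-sequence characterization of shift-compatibility established in Lemma~\ref{lem:char_E_root}, so that the entire statement collapses to a single multiset containment that can be checked by inspection. First I would assemble the two ingredients that Lemma~\ref{lem:char_E_root} requires. From Lemma~\ref{lem:qbinom_factorial} together with the remark immediately following its proof, the basis $\cC(a,c;t;e)$ is $\beta(n)$-factorial with $\beta(n) = q^{aen}$ and root sequence $\rho(k) = q^{e(k+t-c)}$. Next I would verify that the shift $E\colon n\mapsto n+1$ acts on $\beta(n)$ in the affine manner demanded by Lemma~\ref{lem:char_E_root}: since $E$ is a ring homomorphism and $E\cdot q^{aen} = q^{ae(n+1)} = q^{ae}q^{aen}$, we are exactly in the situation $E\cdot\beta(n) = \gamma\beta(n) + \nu$ with $\gamma = q^{ae}$ and $\nu = 0$.

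The key computation is then to see how $\gamma$ and $\nu$ transport the roots. Substituting the values just found,
\[\gamma\rho(j) + \nu = q^{ae}q^{e(j+t-c)} = q^{e((j+a)+t-c)} = \rho(j+a).\]
In other words, applying $E$ merely shifts the index of the root sequence forward by $a$. With this identity in hand, the multiset inclusion in Lemma~\ref{lem:char_E_root}, taken with $A = a$, becomes
\[[\rho(a),\rho(a+1),\ldots,\rho(k+a)] \subseteq [\rho(0),\rho(1),\ldots,\rho(k+a)],\]
and this holds for every $k\in\NN$ because the left-hand multiset is obtained from the right-hand one simply by deleting its first $a$ entries. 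Lemma~\ref{lem:char_E_root} then immediately yields that $E$ is $(a,0)$-compatible with $\cC(a,c;t;e)$, which is the assertion.

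There is no genuine difficulty here: all of the structural content is packaged inside Lemma~\ref{lem:char_E_root}, and the work reduces to the bookkeeping identity $\gamma\rho(j)+\nu = \rho(j+a)$. The single point that needs care, and where I expect the only friction, is the sign regime of $a$. The argument as written presumes $a\geq 0$, so that $A = a$ is a legitimate nonnegative compatibility index and the deleted prefix $\rho(0),\ldots,\rho(a-1)$ is meaningful; this matches the intended applications, such as the Rogers--Ramanujan finitization where $a = 1$. Were negative $a$ admitted one would have to reinterpret the roles of $A$ and $B$, but for the stated $(a,0)$-compatibility the nonnegative case is precisely what is required.
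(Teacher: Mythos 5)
Your proposal is correct and follows essentially the same route as the paper: both invoke Lemma~\ref{lem:char_E_root} with the root sequence $\rho(k) = q^{e(k+t-c)}$ from Lemma~\ref{lem:qbinom_factorial} and reduce the claim to the multiset inclusion of the shifted roots in the original ones. In fact you go slightly further than the paper, which leaves the final inclusion as ``easily checked by the reader,'' whereas you make it transparent via the identity $\gamma\rho(j)+\nu = \rho(j+a)$ and correctly flag the implicit assumption $a \geq 0$ needed for $(a,0)$-compatibility to make sense.
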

\begin{proof}
  Using Lemma~\ref{lem:char_E_root}, we only need to show that the transformed root sequence up to $k$ elements is contained in the root sequence
  up to $k+a$ element. Recall that the root sequence is $\rho(k) = q^{e(k+t-c)}$, so its transformed version is $q^{e(a + k + t - c)}$. Then, we need to 
  check that for all $k \in \NN$
  \[[q^{e(a+l+t-c)}\ :\ l=0,\ldots,k] \subseteq [q^{e(l+t-c)}\ :\ l=0,\ldots,k+a],\]
  which can be easily check by the reader.
\end{proof}

\begin{example}[{cf.~\cite[Theorem~3.3]{andrews1998theory}}]
  Let us use this technique to prove the following classical result
  \begin{equation}\label{eq:Andrews_binom}(z;q)_{n} = \sum_{k=0}^n (-1)^kz^kq^{(k^2-k)/2} \qbinom{n}{k}.\end{equation}

  First, we need the recurrence operator that annihilates the sequence $(z;q)_n$. This is easy by using the definition of the Pochhammer symbol, yielding
  \[(E - 1 + zq^n) \cdot (z;q)_n = 0.\]

  Secondly, we want to express the sequence $(z;q)_n$ in terms of a sum over the $q$-binomial $\qbinom{n}{k}$. Hence, we need to know if our recurrence operator
  $(E - 1 + zq^n)$ is compatible with this basis. Using Lemma~\ref{lem:qbinom_factorial} and~\ref{lem:qbinom_comp_E}, we know the $q$-binomial basis is $q^n$-factorial 
  (hence, compatible with multiplication by $q^n$) and also $(1,0)$-compatible with $E$. More precisely,
  \begin{align*}
    q^n \qbinom{n}{k} & = q^k\qbinom{n}{k} + q^k(q^{k+1}-1)\qbinom{n}{k+1},\\
    E  \qbinom{n}{k} & = \qbinom{n}{k-1} + q^k\qbinom{n}{k}.
  \end{align*}

  Then we can build the transformed operators:
  \begin{align*}
    \cR(q^n) = & q^{k-1}(q^k - 1) S_k^{-1} + q^k\\
    \cR(E) = & q^k + S_k
  \end{align*}

  Using the fact that $\cR$ acts as an operator homomorphism, we then have that
  \[\cR(E - 1 + zq^n) = \frac{zq^k}{q}(q^k-1)S_k^{-1} + ((z+q)q^k - 1) + S_k.\]

  Hence, if we write $(z;q)_n = \sum_k c(k)\qbinom{n}{k}$, then we know $c(k)$ is annihilated by $\cR(E - 1 + zq^n)$. We can now use
  classical holonomic techniques to show that $c(k) = (-1)^kz^kq^{(k^2-k)/2}$.
\end{example}

\section{Product bases}\label{sec:product}

In Section~\ref{sec:qbasis} we studied the compatibility of $\cP^{(1)}, \cF$ and $\cC(1,0;0;1)$ with the operators $q^n$ and $E$. This would allow us to compute
solutions with nested sums like
\[\sum_{i}\sum_{j}\sum_{k} c(k) q^{kj} f_{j}(i) \qbinom{n}{i},\]
for any recurrence operator involving the shift operator $E: n\mapsto (n+1)$ and the multiplication by the sequence $q^n$. In this type of nested sums we see the 
indices do not match between different bases and this restricts the possibilities of finding nice definite closed form solutions to our original recurrence equations.

Sometimes we would like to have a solution that has the form $\sum_k c(k)q^{kn}\qbinom{n}{k}$. This problem was already tackled in~\cite{JimenezPastor2023}
with the use of \emph{Product bases}. Let us consider two $\beta(n)$-factorial bases for the same sequence $\beta(n)$. These bases have associated the two
sequences $a_1(k), b_1(k)$ and $a_2(k), b_2(k)$. We can then combine the two bases by interlacing the sequences $a_1(k)$ with $a_2(k)$ and $b_1(k)$ with
$b_2(k)$.

Let us denote by $P_k(k)$ the elements of the first basis (defined using $a_1(k)$ and $b_1(k)$) and by $Q_k(n)$ the elements of the second basis (defined
using $a_2(k)$ and $b_2(k)$). Then the resulting $\beta(n)$-factorial basis using $a(k)$ (resp. $b(k)$) the interlacing sequence of $a_1(k)$ and $a_2(k)$ (resp.
of $b_1(k)$ and $b_2(k)$) is the sequence
\[1,\quad P_1(n),\quad P_1(n)Q_1(n),\quad P_2(n)Q_1(n),\quad P_2(n)Q_2(n),\quad \ldots.\]

It was shown in~\cite{JimenezPastor2023} that this idea can be generalized to more than 2 basis and even with a different type of interlacing (see \emph{Shuffled
basis} in~\cite{JimenezPastor2023}). But for the purpose of this paper, we will simply recall the results that allow to extend the compatibility from 
the $\beta(n)$-factorial bases to the corresponding Product Basis.

\begin{theorem}[{cf.~\cite[Theorem~34 and Theorem~55]{JimenezPastor2023}}]\label{thm:product}
  Let $\cB_1,\ldots,\cB_m$ be $\beta(n)$-factorial bases. Then
  \begin{enumerate}
    \item The product basis $\cB = \prod_{i=1}^m \cB_i$ is $\beta(n)$-factorial.
    \item Let $L$ be a ring homomorphism of $\QQ(q)^\NN$ and assume $L$ is $(A_i,B_i)$-compatible with each $\cB_i$ in $t_i$ sections.
    Then $L$ is $(mA,B)$-compatible in $mt$ sections with $\cB$, where $A = \max\{A_i\}$, $B = \max\{B_i\}$ and $t = \lcm\{t_i\}$.
  \end{enumerate}
\end{theorem}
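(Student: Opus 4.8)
The plan is to handle the two parts separately: the interlacing structure gives part~(1) directly, and for part~(2) the multiplicativity of $L$ reduces everything to the compatibility characterization in Lemma~\ref{lem:char}. For part~(1), write $k=mj+r$ with $0\le r<m$ and let $P^{(i)}_\ell(n)$ denote the $\ell$-th element of $\cB_i$, so that $\cB_k(n)=\prod_{i=1}^{r}P^{(i)}_{j+1}(n)\prod_{i=r+1}^{m}P^{(i)}_{j}(n)$. I would compute $\cB_{k+1}(n)/\cB_k(n)$ directly: for $r<m-1$ exactly one factor advances, giving $P^{(r+1)}_{j+1}(n)/P^{(r+1)}_{j}(n)=a_{r+1}(j)\beta(n)+b_{r+1}(j)$, and for $r=m-1$ the block rolls over and the last factor advances, giving $a_m(j)\beta(n)+b_m(j)$. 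In every case the ratio is affine in $\beta(n)$, so with $a(k),b(k)$ the interlacings of the $a_i,b_i$ one gets $\cB_{k+1}(n)=(a(k)\beta(n)+b(k))\cB_k(n)$, which is exactly Definition~\ref{def:factorial}; this is the routine part and reproves \cite[Theorem~34]{JimenezPastor2023}.

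For part~(2) the engine is that $L$ is a ring homomorphism, hence distributes over the product: $L\cB_k(n)=\prod_{i=1}^{r}\bigl(LP^{(i)}_{j+1}(n)\bigr)\prod_{i=r+1}^{m}\bigl(LP^{(i)}_{j}(n)\bigr)$. Since each $LP^{(i)}_\ell(n)\in\QQ(q)[\beta(n)]$ by compatibility, so is $L\cB_k(n)$, which is the first condition of Lemma~\ref{lem:char} for $\cB$ (whose root sequence is the interlacing of the $\rho_i$ from part~(1)). For the degree bound I would first observe that a compatible ring homomorphism forces $L\beta(n)$ to be affine in $\beta(n)$: from $LP^{(i)}_1(n)=a_i(0)L\beta(n)+b_i(0)$ together with $\deg_Y(LP^{(i)}_\ell)\le \ell+B_i$, a higher-degree image of $\beta(n)$ would make the degrees grow faster than $\ell$. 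Thus $\deg_Y(LP^{(i)}_\ell)=\ell$, the product has degree exactly $k$, and the upper band is $B=\max_i B_i$ rather than the naive $mB$ — the second condition.

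For the lower band I would use divisibility: since each $\cB_i$ is $\beta(n)$-factorial, $P^{(i)}_{\ell-A_i}(n)$ divides every shifted factor occurring in $LP^{(i)}_\ell(n)$, so $\prod_i P^{(i)}_{\ell_i-A_i}(n)$ divides $L\cB_k(n)$; a root-multiset comparison using $A_i\le A$ then shows that $\cB_{k-mA}(n)$ divides $L\cB_k(n)$, which is the third condition with lower bandwidth $mA$. By Lemma~\ref{lem:char} this establishes $(mA,B)$-compatibility of $L$ with $\cB$. For the section count I would track periodicity: each coefficient $\alpha^{(i)}_{r,s}$ is $t_i$-periodic in the basis index, so passing to the common period $t=\lcm\{t_i\}$ makes all factor-coefficients simultaneously $t$-periodic, while the product construction cycles through the $m$ bases with period $m$ in $k$; composing the two periodicities organizes the coefficients of $L\cB_k(n)$ into $mt$ sections, matching \cite[Theorem~55]{JimenezPastor2023}.

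I expect the main obstacle to be the bookkeeping behind these last two steps: after expanding the product one must re-express the shifted products $\prod_i P^{(i)}_{\ell_i+s_i}(n)$ over $\cB$ and check that only the window $\cB_{k-mA}(n),\ldots,\cB_{k+B}(n)$ survives, and then verify that the induced coefficients depend on $k$ solely through its residue modulo $mt$. Reconciling the $m$-fold block structure of the product basis with the $\lcm$ of the individual section counts, so that the two periodicities compose to period exactly $mt$ and not something larger, is where the care is required.
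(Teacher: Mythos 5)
Your argument is correct in substance, but be aware of how it relates to the paper: the paper never proves Theorem~\ref{thm:product} at all --- it is recalled with a ``cf.''\ citation to Theorems~34 and~55 of \cite{JimenezPastor2023}, which treat the case $\beta(n)=n$, and the adaptation to general $\beta(n)$-factorial bases is left implicit. Your write-up supplies exactly that missing adaptation, and by the natural route: part~(1) by checking Definition~\ref{def:factorial} directly on the interlaced product, and part~(2) by verifying the three conditions of Lemma~\ref{lem:char}, using multiplicativity of $L$ to reduce everything to the factor bases. The two observations that carry the proof --- that compatibility of a ring homomorphism forces $\deg_Y(L\beta(n))\le 1$, so that $\deg_Y(L\cB_k(n))\le k$ and the upper band does not multiply with $m$, while the lower bands do accumulate to $mA$ because each factor $LP^{(i)}_\ell(n)$ is divisible, as a polynomial in $\beta(n)$, by $P^{(i)}_{\ell-A_i}(n)$ and hence by $P^{(i)}_{\ell-A}(n)$ --- are exactly the content that makes the window $(mA,B)$ correct rather than the naive $(mA,\sum_i B_i)$.

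Two blemishes, neither fatal. First, your degree argument rules out $\deg_Y(L\beta(n))\ge 2$ but not $\deg_Y(L\beta(n))=0$: a ring homomorphism such as $a(n)\mapsto a(0)$ is $(0,0)$-compatible with the $q$-falling basis $\cF$ (it sends $f_k(n)$ to $0$ for all $k\ge 1$) yet sends $\beta(n)=q^n$ to the constant $1$, so your claimed equality $\deg_Y(LP^{(i)}_\ell)=\ell$ can fail; only the inequality $\deg_Y(LP^{(i)}_\ell)\le\ell$ holds in general, and it is all you need for condition~(2) of Lemma~\ref{lem:char}. Second, the section count is less delicate than your closing paragraph suggests: in Definition~\ref{def:compatibility} the coefficients $\alpha_{r,i}$ are arbitrary sequences in the quotient variable, so they need not (and in general do not) depend on $k$ only through its residue modulo $mt$; consequently any compatibility in one section, which is what Lemma~\ref{lem:char} delivers, is formally also a compatibility in $mt$ sections by reindexing. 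The periodicity bookkeeping you describe is what makes the $mt$-section coefficients come out as uniform closed-form expressions --- the feature needed in practice to build the recurrence matrix $\cR_\cB(L)$ --- not what makes the compatibility statement true.
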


\begin{example}[Product of $(q^{2n})$-factorial bases]\label{Example_Product}
  Let us consider the bases $\cP^{(2)}$ and $\cC(1,0;0;2)$. They are $(q^{2n})$-factorial bases and let us denote by $P_k(n)$ and $B_k(n)$ their respective elements, i.e., 
  \[P_k(n) = q^{2kn},\qquad Q_k(n) = \qbinom[q^2]{n}{k}.\]
  Let us consider the product bases of the two, namely
  \[B_{2k}(n) = P_k(n)Q_k(n),\qquad B_{2k+1}(n) = P_{k+1}(n)Q_k(n).\]

  Using Examples~\ref{exm:power} and~\ref{exm:power:2}, and Lemmas~\ref{lem:qbinom_factorial} and~\ref{lem:qbinom_comp_E}, we know both bases are 
  compatible with the multiplication by $q^{2n}$ and the shift operator $E$. Hence, by Theorem~\ref{thm:product}, we know that the product basis 
  is again $(q^{2n})$-factorial and is $(1,0)$-compatible with $E$. Let us detail the compatibility conditions with both $q^{2n}$ and $E$.

  The compatibility with $q^{2n}$ extends naturally to the product basis as follows:
  \begin{align*}
    q^{2n} B_{2k}(n)   & = q^{2n}P_k(n)Q_k(n) = P_{k+1}(n)Q_k(n) = B_{2k+1}\\
    q^{2n} B_{2k+1}(n) & = P_{k+1}(n) q^{2n}Q_k(n) \\
                       & = P_{k+1}(n) \left(q^{2k}Q_k(n) + q^{2k}(q^{2k+2}-1)Q_{k+1}(n)\right) \\
                       & = q^{2k} B_{2k+1}(n) + q^{2k}(q^{2k+2}-1)B_{2k+2}(n).
  \end{align*}

  The compatibility of $E$ is also extended by using the compatibility of $E$ with the basis $P_k(n)$ and $Q_k(n)$:
  \begin{align*}
    E B_{2k}(n)   & = P_k(n+1) Q_k(n+1) \\
                  & = (q^{2k}P_k(n)) (Q_{k-1}(n) + q^{2k}Q_{k}(n)) \\
                  & = q^{2k}P_k(n)Q_{k-1}(n) + q^{4k}P_k(n)Q_{k}(n) \\
                  & = q^{2k}B_{2k-1}(n) + q^{4k}B_{2k}(n).\\
    E B_{2k+1}(n) & = P_{k+1}(n+1) Q_k(n+1)\\
                  & = (q^{2k+2}P_{k+1}(n)) (Q_{k-1}(n) + q^{2k}Q_{k}(n))\\
                  & = (q^{2k+2}P_{k+1}(n)Q_{k-1}(n) + q^{4k+2}P_{k+1}Q_{k}(n))\\
                  & = (q^{2k+2}q^{2n}P_{k}(n)Q_{k-1}(n) + q^{4k+2}P_{k+1}Q_{k}(n))\\
                  & = q^{2k+2}q^{2n} B_{2k-1}(n) + q^{4k+2}B_{2k+1}(n)\\
                  & = q^{4k} B_{2k-1}(n) + q^{4k}(q^{2k}-1)B_{2k}(n) + q^{4k+2}B_{2k+1}(n)
  \end{align*}
  Confirming that $E$ is $(2,0)$-compatible with the product basis. Combining everything, we have the following recurrence matrices associated with these two
  compatibilities:
  \begin{align*}
    \cR(q^{2n}) & = \begin{pmatrix}0 & q^{2k-2}(q^{2k}-1) S_k^{-1}\\1& q^{2k}\end{pmatrix},\\
    \cR(E)      & = \begin{pmatrix}q^{4k} & q^{4k}(q^{2k}-1)\\ q^{2k+2}S_k & q^{4k+2} + q^{4k+4}S_k\end{pmatrix}.
  \end{align*}
\end{example}
  
\section{Examples}\label{sec:example}

In his comprehensive list of finitizations Sills~\cite{Sills_Fin} listed many Rogers--Ramanujan type identities. We can tackle these examples with the factorial method. We will give an example here and showcase the use cases of this method.

\begin{example}[Identity 3.38 of \cite{Sills_Fin}]\label{eg:Sills_example} Let $a_n$ be defined by
\[ \left(E^2 - (1+q) E- (q^{2n+4}-q)\right)a_n = 0, \] in operator notation, with the initial values $a_0=1$, and $a_1 = 1+q$. Factorial basis method allows us to experiment and search for a suitable basis. For example, we can try the $q$-Binomial basis $B_{k}(n):=\qbinom{n}{k}$ and expand $a_n = \sum_{k\geq 0 } a'_{k} B_{k}(n)$. Factorial basis method determines that $a'_k$ is annihilated by the operator \begin{align*}
S_k^4 &-(1+q) (1-q^{k+2})S_k^3 - (q^{2k+8} - q^{2k+4} + q^{k+3} + q^{k+2} - q)S_k^2\\ &+q^{2k+6}(1+q) (1-q^{k+2})S_k - q^{2k+5}(1+q^{k+1}) (1-q^{k+2}),
\end{align*} 
and has the initial conditions $a'_0 =1,$ $a'_1 =q,$ $a'_2 =q^4,$ and $a'_3=q^7$. Using this recurrence and \texttt{qFunctions}~\cite{ablinger2021qfunctions}, we can guess and prove that the $a'_k$ gets annihilated by the much simpler operator \begin{equation}\label{eq:double_shift}S_k^2 - q^{2n+4}.\end{equation} Hence, we can easily write a formula for $a'_k$ and get a closed formula for $a_n$ with respect to the basis $B_{k}(n)$. 

Furthermore, the double shift in the operator \eqref{eq:double_shift} suggests that a different basis where this double shift is inherent might be a better choice. To that end, we can look at sections of $B_{k}(n)$. The section of $B_{k}(n)$ corresponding to $\qbinom{n}{2k}$, which corresponds to writing $a_n = \sum_{k\geq 0 } a''_{k} B_{2k+1}(n)$, shows that \[\left(S_k - q^{4n+4}\right) a''_{k} =0\] and has the initial conditions $a''_{0}=1$. Therefore, we get \begin{equation}\label{eq:Sills_pn338}a_n = \sum_{k=0}^\infty q^{2k(k+1)} \qbinom{n+1}{2k+1}.\end{equation}  This is the sum representation in \cite{Sills_Fin}. Moreover, the other sections $B_{2k}(n)$, $B_{2k+1}(n)$, and $B_{2k}(n+1)$ fail to calculate corresponding summands and certify that these choices do not yield sum representations.
\end{example}

We note that the factorial basis method finds an Ore operator that would annihilate the corresponding summand for any basis and section, indiscriminately. Then solving the summands explicitly with respect to the bases using the initial conditions of the original sequence shows us if there exists such a summand. For example, one can check that if we were to change $a_1=1$ in the Example~\ref{eg:Sills_example}, then only the $B_{2k}(n)$ section will yield a solution.

As it was also indicated by the Rogers--Ramanujan example, the factorial basis method can be applied to well-known partition theory results. These can be led to interesting theorems similar to Theorem~\ref{thm:Shunsuke_RR}. To demonstrate that, we recite a generating function due to Andrews--Alladi--Gordon, found using the method of weighted words in their proof and refinement of Capparelli's identity \cite{AlladiAndrewsGordon_Capparelli}.

\begin{example}\label{ex:AAG} Let $G_{N}(a,b;q)$ be the generating function for partitions into parts $>1$ and $\leq 3N-2$ with minimal difference between its parts is $\geq 2$, where the difference between consecutive parts $\geq 4$ unless consecutive parts add up to a multiple of 3, where the exponent of $a$, $b$ and $q$ counts the number of 1 modulo 3 parts, the number of 2 modulo 3 parts, and the total size of the partition, respectively. Then,  we have \cite[Lemma 1]{AlladiAndrewsGordon_Capparelli}
\begin{equation}\label{eq:AAG_GN} G_{N}(a,b;q) = \sum_{j=0}^\infty q^{3\binom{N-2j}{2}}\qbinom[q^3]{N}{2j} (-a q^4;q^6)_j (-bq^2; q^6)_j.\end{equation} We know \cite[(4.1)]{AlladiAndrewsGordon_Capparelli} that $G_N:=G_{N}(a,b;q)$ satisfies the following recurrence
\begin{align*}
G_{N+1}&=(1+q^{3n})G_{N} + (b q^{3n-1} + a q^{3n+1} + a b q^{6n})G_{N-1}\\
&+abq^{6n-3}(1-q^{3n-3})G_{N-2},
\end{align*}
with initial conditions $G_0=G_1=1$, and $G_2 = 1+ bq^2 + q^3 + a q^4 + a b q^6$. Furthermore, it is also known \cite[(4.7, 4.8)]{AlladiAndrewsGordon_Capparelli} that \begin{equation}\label{eq:AAG_lim}\lim_{N\rightarrow \infty} G_N(a,b;q) = (-q^3;q^3)_\infty(-aq^4;q^4)_\infty (-bq^2;q^6)_\infty.\end{equation}

Picking the binomial basis with $q^3$ (i.e., the basis $\cC(1,0;0;3)$ from Lemma~\ref{lem:qbinom_factorial}) and splitting it in two sections, we can write $G_N(a,b;q)$ as 
\begin{equation}\label{eq:AAG_GN_FacBasis}
  G_N(a,b;q) = \sum_{j=0}^\infty d_j(a,b;q) \qbinom[q^3]{N}{2j}.
\end{equation} 
Factorial basis method shows that $d_j:=d_j(a,b;q)$ is annihilated by the operator 
\begin{align}\label{eq:rAAG_ec_ck}
    S_j^3 - q^{6j+14}&(b+aq^2+q^{6j+13}+abq^{6j+16}  )S_j^2 \\ \nonumber
    &+ ab q^{12j+24}(1-q^{6j+9})(1-q^{6j+12})S_j.
    %Sj^3 + (((-q^30)*a*b + (-q^27))*q_k^12 + ((-q^16)*a + (-q^14)*b)*q_k^6)*Sj^2 + ((q^45)*a*b*q_k^24 + ((-q^36 - q^33)*a*b)*q_k^18 + (q^24)*a*b*q_k^12)*Sj.
\end{align} 
This operator, once again, does not correspond to the minimal recurrence $d_j$ satisfies. 

Here it is easy to observe that there is an extra shift operator $S_j$ common in the terms of the annihilator~\eqref{eq:rAAG_ec_ck}. Removing this common term yields the minimal recurrence satisfied by $d_j$. In general, as in Example~\ref{eg:Sills_example}, the minimal recurrence can be found and shown to annihilate $d_j$ using the proven recurrence, through \texttt{qFunctions}~\cite{ablinger2021qfunctions}. 

Hence, we have that $d_j$ is defined by the recurrence \begin{align}\label{eg:AAG_rec_shunsuke} d_j = q^{6j-4}&(b+q^{6j-5}+aq^2+ a b q^{6j-2})d_{j-1}\\ \nonumber &+ ab q^{12j-12} (1 - q^{6 j - 9}) (1 - q^{6 j - 6})d_{j-2}
\end{align} with the initial conditions $d_0=1$ and $d_1=bq^2 + q^3 + a q^4 + a b q^6$. 
\end{example}

Example~\ref{ex:AAG}, yields a theorem in the spirit of Theorem~\ref{thm:Shunsuke_RR}.

\begin{theorem} The $d_j$ defined as in \eqref{eg:AAG_rec_shunsuke}, we have
\[\sum_{j\geq 0} \frac{d_j}{(q^3;q^3)_{2j}} =(-q^3;q^3)_\infty(-aq^4;q^4)_\infty (-bq^2;q^6)_\infty. \]
\end{theorem}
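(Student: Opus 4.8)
The plan is to obtain the stated $q$-series identity as the $N\to\infty$ limit of the finite factorial-basis expansion \eqref{eq:AAG_GN_FacBasis} established in Example~\ref{ex:AAG}. Recall that this expansion reads $G_N(a,b;q) = \sum_{j\geq 0} d_j\qbinom[q^3]{N}{2j}$, where for each fixed $N$ the sum is genuinely finite since $\qbinom[q^3]{N}{2j}=0$ whenever $2j>N$, and where $d_j$ is the sequence fixed by the recurrence \eqref{eg:AAG_rec_shunsuke} together with its initial conditions. On the left-hand side the limit is already known: by \eqref{eq:AAG_lim} we have $\lim_{N\to\infty} G_N(a,b;q) = (-q^3;q^3)_\infty(-aq^4;q^4)_\infty(-bq^2;q^6)_\infty$, which is exactly the target product. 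It therefore suffices to evaluate the limit of the right-hand side.

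First I would compute the pointwise limit of each $q$-binomial coefficient. Writing out the definition,
\[\qbinom[q^3]{N}{2j} = \frac{(q^3;q^3)_N}{(q^3;q^3)_{2j}\,(q^3;q^3)_{N-2j}} = \frac{1}{(q^3;q^3)_{2j}}\prod_{i=N-2j+1}^{N}\left(1-q^{3i}\right),\]
the trailing product has exactly $2j$ factors, each tending to $1$ as $N\to\infty$ for $|q|<1$; hence $\lim_{N\to\infty}\qbinom[q^3]{N}{2j} = 1/(q^3;q^3)_{2j}$. If the limit could be moved inside the summation, the right-hand side would converge to $\sum_{j\geq0} d_j/(q^3;q^3)_{2j}$, and combining this with the left-hand side limit would prove the theorem.

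The main obstacle is justifying the interchange of the limit $N\to\infty$ with the infinite summation over $j$, and I would settle it with a dominated-convergence argument valid for $|q|<1$. For $0<q<1$ the identity above gives the uniform bound $0\le \qbinom[q^3]{N}{2j}\le \frac{1}{(q^3;q^3)_{2j}(q^3;q^3)_\infty}$, since $(q^3;q^3)_N\le 1$ and $(q^3;q^3)_{N-2j}\ge (q^3;q^3)_\infty$; hence $\left|d_j\,\qbinom[q^3]{N}{2j}\right|\le \frac{|d_j|}{(q^3;q^3)_{2j}(q^3;q^3)_\infty}$ independently of $N$. It then remains to show $\sum_{j\geq0}|d_j|/(q^3;q^3)_{2j}<\infty$, which follows from the growth of the lowest power of $q$ occurring in $d_j$: the factor $q^{6j-4}$ multiplying $d_{j-1}$ and the factor $q^{12j-12}$ multiplying $d_{j-2}$ in \eqref{eg:AAG_rec_shunsuke} force this $q$-valuation to grow quadratically in $j$ (one checks the valuations begin $0,2,10,24,\dots$, with successive differences increasing by $6$), so that for fixed $a,b$ and $|q|<1$ the terms decay like $|q|^{3j^2}$ and the dominating series converges absolutely. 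With this dominating series in hand, the Weierstrass $M$-test (equivalently, dominated convergence for series) permits the term-by-term passage to the limit.

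Finally I would assemble the pieces: taking $N\to\infty$ in \eqref{eq:AAG_GN_FacBasis}, the left side converges to $(-q^3;q^3)_\infty(-aq^4;q^4)_\infty(-bq^2;q^6)_\infty$ by \eqref{eq:AAG_lim}, while the justified interchange turns the right side into $\sum_{j\geq0} d_j/(q^3;q^3)_{2j}$, which establishes the claimed equality. The only genuinely delicate point is the valuation estimate underlying the dominating series; everything else reduces to the elementary $q$-binomial limit and routine bookkeeping.
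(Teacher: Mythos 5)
Your proposal is correct and follows exactly the paper's argument: take $N\to\infty$ in \eqref{eq:AAG_GN_FacBasis}, use \eqref{eq:AAG_lim} for the left-hand side, and use $\qbinom[q^3]{N}{2j}\to 1/(q^3;q^3)_{2j}$ termwise on the right. The only difference is that you additionally justify the limit--sum interchange via a dominated-convergence bound based on the quadratic growth of the $q$-valuation of $d_j$, a point the paper leaves implicit; your valuation computation ($0,2,10,24,\dots$ with differences increasing by $6$) is accurate.
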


The proof of the above statement comes from taking the limit $N\rightarrow\infty$ of \eqref{eq:AAG_GN_FacBasis} and comparing this limit with \eqref{eq:AAG_lim}.

Finding a closed formula for $d_j$ would also have $q$-series implications when comparing \eqref{eq:AAG_GN} and \eqref{eq:AAG_GN_FacBasis}. The recurrence \eqref{eg:AAG_rec_shunsuke} is of order 2 and $d_j(a,b;q)$ does not have a d'Alembertian solution. One can try to apply factorial basis for $d_j$ and try to construct $d_j$ as a nested definite sum. Here, we will look at a particular substitution and see a corollary of $d_j$. Fix $b=-q$, then we get \begin{equation}\label{eq:d_j_Subd}d_j(a,-q;q) = a^jq^{3j^2+j}(q^3;q^6)_j,\end{equation} automatically by Sigma \cite{schneider2004summation}. Comparing \eqref{eq:AAG_GN} with $b\mapsto -q$ and \eqref{eq:AAG_GN_FacBasis} with \eqref{eq:d_j_Subd} yields
\begin{corollary}\label{cor:CO} For any non-negative $N$, we have
\begin{equation*}
\sum_{j=0}^\infty q^{3\binom{N-2j}{2}}\qbinom[q^3]{N}{2j} (-a q^4;q^6)_j (q^3; q^6)_j = \sum_{j=0}^\infty a^j q^{3j^2+j}\qbinom[q^3]{N}{2j}(q^3;q^6)_j.
\end{equation*}
\end{corollary}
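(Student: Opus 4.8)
The plan is to recognize that Corollary~\ref{cor:CO} is nothing more than the equality of two distinct expansions of the same bounded generating function $G_N(a,b;q)$, read off at the specialization $b=-q$. Equations~\eqref{eq:AAG_GN} and~\eqref{eq:AAG_GN_FacBasis} both represent $G_N(a,b;q)$: the former is the Andrews--Alladi--Gordon form, while the latter is the factorial-basis expansion along the section $\qbinom[q^3]{N}{2j}$ whose coefficients $d_j$ satisfy the order-two recurrence~\eqref{eg:AAG_rec_shunsuke}. Since both series equal $G_N(a,b;q)$ for every $N$, and since $\qbinom[q^3]{N}{2j}$ vanishes once $2j>N$ so that each sum is finite, they coincide as a genuine polynomial identity; I would then set $b=-q$ in this common identity and compare term by term.

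On the left-hand side of~\eqref{eq:AAG_GN}, the substitution $b=-q$ turns the factor $(-bq^2;q^6)_j$ into $(q^3;q^6)_j$, producing precisely the left-hand sum of the corollary. On the factorial-basis side~\eqref{eq:AAG_GN_FacBasis}, the substitution replaces $d_j(a,b;q)$ by $d_j(a,-q;q)$, so the whole task reduces to establishing the closed form~\eqref{eq:d_j_Subd}, namely $d_j(a,-q;q)=a^jq^{3j^2+j}(q^3;q^6)_j$; inserting this expression yields the right-hand sum of the corollary and closes the argument.

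To prove~\eqref{eq:d_j_Subd} I would verify that the proposed closed form satisfies the recurrence~\eqref{eg:AAG_rec_shunsuke} after setting $b=-q$, together with the correct initial data. The initial conditions match at once: $d_0=1$, and at $b=-q$ the value $d_1=bq^2+q^3+aq^4+abq^6$ collapses to $aq^4(1-q^3)=a^1q^{4}(q^3;q^6)_1$. The structural point is that the closed form obeys the \emph{first-order} recurrence
\[d_j = a\,q^{6j-2}\bigl(1-q^{6j-3}\bigr)\,d_{j-1},\]
read off from the telescoping exponent $3j^2+j-3(j-1)^2-(j-1)=6j-2$ and the ratio $(q^3;q^6)_j/(q^3;q^6)_{j-1}=1-q^{6j-3}$. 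Writing the $b=-q$ degeneration of~\eqref{eg:AAG_rec_shunsuke} as $d_j=c_1(j)d_{j-1}+c_2(j)d_{j-2}$, it then suffices to check that the above first-order operator is a right factor, i.e.\ the single polynomial identity $r(j)r(j-1)=c_1(j)r(j-1)+c_2(j)$ with $r(j)=a\,q^{6j-2}(1-q^{6j-3})$.

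The main obstacle is exactly this factorization: a priori a second-order recurrence need not admit a first-order hypergeometric solution, so the real content of~\eqref{eq:d_j_Subd} is that the specialization $b=-q$ forces the operator in~\eqref{eg:AAG_rec_shunsuke} to degenerate. Once the displayed polynomial identity is confirmed---a routine but lengthy computation in $\QQ(q)[a]$, carried out automatically by Sigma~\cite{schneider2004summation}---the closed form is pinned down by induction from the matching initial values, and the corollary follows by comparing the two specialized expansions term by term in $\qbinom[q^3]{N}{2j}$.
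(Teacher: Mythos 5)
Your proposal is, in structure, exactly the paper's own derivation: the paper obtains Corollary~\ref{cor:CO} by specializing $b=-q$ in the two expansions \eqref{eq:AAG_GN} and \eqref{eq:AAG_GN_FacBasis} of $G_N(a,b;q)$ and invoking the closed form \eqref{eq:d_j_Subd}, which it establishes from the recurrence for $d_j$ ``automatically by Sigma''; your first-order right-factor certificate merely makes that last step explicit. (The paper separately sketches a $q$-theoretic proof via the $q$-Chu--Vandermonde sum, but only as an additional demonstration, not as the derivation of the corollary.)

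There is one genuine problem, though it is not of your making: the polynomial identity you propose to verify, $r(j)r(j-1)=c_1(j)r(j-1)+c_2(j)$, is \emph{false} when $c_1,c_2$ are read off from \eqref{eg:AAG_rec_shunsuke} as printed; the two sides differ by $2aq^{12j-11}(1-q^{6j-9})(1-q^{6j-6})$, and already at $j=2$ the printed recurrence contradicts the value $d_2(a,-q;q)=a^2q^{14}(1-q^3)(1-q^9)$ that one computes directly from the recurrence for $G_N$. The culprit is a sign typo in the paper: converting the annihilator \eqref{eq:rAAG_ec_ck} into recurrence form gives
\begin{equation*}
  d_j = q^{6j-4}\bigl(b+q^{6j-5}+aq^2+abq^{6j-2}\bigr)d_{j-1} \;-\; ab\,q^{12j-12}\bigl(1-q^{6j-9}\bigr)\bigl(1-q^{6j-6}\bigr)d_{j-2},
\end{equation*}
so the $d_{j-2}$ term must carry a minus sign. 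With that correction, at $b=-q$ one has $c_2(j)=a\,q^{12j-11}(1-q^{6j-9})(1-q^{6j-6})$, and your identity does hold:
\begin{equation*}
  c_1(j)r(j-1)+c_2(j) = a\,q^{12j-12}\bigl(1-q^{6j-9}\bigr)\bigl(aq^2-aq^{6j-1}\bigr) = r(j)r(j-1).
\end{equation*}
So your argument is sound and closes the proof by induction from $d_0=1$ and $d_1=r(1)d_0$; just be sure to run the certificate against the recurrence derived from \eqref{eq:rAAG_ec_ck} (equivalently, fix the sign in \eqref{eg:AAG_rec_shunsuke}) rather than against the displayed equation verbatim, since as written that ``routine computation'' would come back negative.
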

Although both summands visually look similar (which is a by-product of our choice of the factorial basis section), a $q$-theoretic proof of such a result requires some work. To demonstrate that, we will outline its $q$-theoretic proof here: as a first step, one can extract the exponents of $a$ on both sides of this identity (on the left-hand side by \eqref{eq:Andrews_binom}) and compare these terms. Simplifying both sides yield \begin{equation}\label{eq:Cor_mid_step}\frac{1}{(q;q)_{N-2k}} = \sum_{j=0}^\infty \frac{q^{\frac{3}{2}(N-2j)(N-2j-1)}}{(q;q)_{N-2j}(q^2;q^2)_{j-k}},\end{equation} where $N\geq 2k\geq 0$. We recall that $1/(q;q)_k = 0$ when $k$ is a negative integer. Making the substitutions $i=j-k$ and $M=N-2k$, using \cite[I.10]{gasper2011basic}, $q\mapsto 1/q$, using \cite[I.3]{gasper2011basic} in succession while doing cancellations as needed, we get
\[q^{\frac{M^2-M}{2}} = \sum_{i=0}^\infty \frac{(q^{-M};q)_{2i}}{(q^2;q^2)_i}(-1)^i q^{-i^2 + 2iM}.\] This last identity can be seen as a consequence of \cite[II.7]{gasper2011basic}, the $q$-Chu--Vandermonde sum, as the right-hand side sum is equal to $\lim_{\rho\rightarrow\infty} {}_2\phi_1(q^{-M},q^{-M+1}, \rho ; q^2, \rho q^{2M-1})$, where ${}_2\phi_1(a,b,c;q,z)$ is the classical basic hypergeometric series as defined in \cite{gasper2011basic}. Following these manipulations backward yields a proof of Corollary~\ref{cor:CO}.

Now we present the proof of Theorem~\ref{thm:Main_Shunsuke_thm}, with commentary on the use of factorial basis and other relevant methods, as the next example.

\begin{example}[Proof of Theorem~\ref{thm:Main_Shunsuke_thm}]
%We recall that we are trying to find the closed form for the sequence 
%\begin{equation}\tag{\ref{eq:Shunsuke_RR_rec}} c^{(i)}_k(q) = -q^{k} c^{(i)}_{k-1}(q) + q^{k-1} c^{(i)}_{k-2}(q),
%\end{equation} where $c^{(i)}_0(q) = 1$, $c^{(1)}_1(q)= q$, and $c^{(2)}_1(q)= 0$ for $i=1$ or 2. 
Looking at the first few terms of $c^{(i)}_k$, in \eqref{eq:Shunsuke_RR_rec}, for $k=0,1,\dots$: \begin{align*} c^{(1)}_k\, &: 1,q,0,q^4,-q^7,q^9+q^{11},-q^{13}-q^{14}-q^{16},\dots,\\
 c^{(2)}_k\, &: 1,0,q^2,-q^4,q^6+q^7,-q^9-q^{10}-q^{11},\dots,
\end{align*} we observe that the minimal $q$-exponent in this sequence does not follow a simple quadratic polynomial, but the interlacing subsequences $c^{(i)}_{2k}$ and $c^{(i)}_{2k+1}$ do. It is straightforward to find the recurrences for these subsequences from 
\eqref{eq:Shunsuke_RR_rec}:
\begin{align}
c^{(i)}_{2k} &= q^{2k}(1+q+q^{2k-3})c^{(i)}_{2k-2}-q^{4k-1} c^{(i)}_{2k-4},\\
c^{(i)}_{2k+1} &=q^{2k+1} (1+q+q^{2k-2})c^{(i)}_{2k-1}-q^{4k+1} c^{(i)}_{2k-3},
\end{align} 
Similarly, it is easy to do formal substitutions and find the recurrences satisfied by the new sequences; this can be done automatically by \cite{ablinger2021qfunctions} too. These are
\begin{align} 
c^{(i)}_{2k} &= q^{k^2+k} c'^{(i)}_{k}\text{ and }c^{(i)}_{2k+1} = q^{(k+1)^2} c''^{(i)}_{k},
\intertext{where}
c'^{(i)}_k &= (1+q+q^{2k-3})c'^{(i)}_{k-1}-q c'^{(i)}_{k-2},\\
c''^{(i)}_k &= (1+q+q^{2k-2})c''^{(i)}_{k-1}-q c''^{(i)}_{k-2},
\end{align} 
are satisfied with the initial values $c'^{(1)}_{0}=c''^{(1)}_{0}=c''^{(1)}_{1}=c'^{(2)}_{0}=c'^{(2)}_{1}=1$, $c'^{(1)}_{1}=c''^{(2)}_{0}=0$, and $c''^{(2)}_{0}=-1$. We suppose that \[c'^{(i)}_k = \sum_{i=0}^k \hat{c}^{(i)}_l \qbinom{k}{l}\] then factorial basis method finds and proves a 4th order annihilator for $\hat{c}^{(i)}_l$: 
\begin{align*}
S_l^4 &- (1+q-q^{l+2}-q^{l+3}+q^{2 l+5})S_l^3 \\&+ (q-q^{l+2}-q^{l+3}+q^{2 l+3}+2 q^{2 l+4}-q^{3 l+5}-q^{3 l+6}-q^{3 l+7})S_l^2\\ &- (q^{2 l+2}-q^{3 l+3}-2 q^{3 l+4}-q^{3 l+5}+q^{4 l+5}+q^{4 l+6}+q^{4 l+7})S_l \\&-q^{3 l+2}+q^{4 l+3}+q^{4 l+4}-q^{5 l+5}.
\end{align*}
We can guess and prove (utilizing the already proven annihilator above) for $\hat{c}^{(i)}_l$ using \cite{ablinger2021qfunctions}. This annihilator, $S_l^2 -   q^{2l+1}S_l - q^{3l+1}$, applied to the sequences shows that \[\hat{c}^{(i)}_{l+2} =  q^{2l+1} \hat{c}^{(i)}_{l+1}  + q^{3l+1} \hat{c}^{(i)}_{l}\] with the initial values $\hat{c}^{(1)}_0=\hat{c}^{(2)}_0=1$, $\hat{c}^{(1)}_1=-1$, and  $\hat{c}^{(2)}_1=0$, are satisfied. We now make the substitution $\hat{c}^{(i)}_l = q^{-(l-1)^2} \bar{c}^{(i)}_l$. The new sequence $ \bar{c}^{(i)}_l$ satisfies the recurrence  \[\bar{c}^{(i)}_{l+2} =   \bar{c}^{(i)}_{l+1}  + q^{-l+1} \bar{c}^{(i)}_{l},\] with $\bar{c}^{(1)}_{0}=-1/q$,  $\bar{c}^{(1)}_{1}=1$, $\bar{c}^{(1)}_{2}=\bar{c}^{(2)}_{0}=0$, $\bar{c}^{(1)}_{3}=1$,  and $\bar{c}^{(2)}_{1}=-1$. Therefore, we can use Cigler's formula~\eqref{eq:Cigler_formula} for $\bar{c}^{(i)}_{l}$ when $l\geq 2$ for $i=1$ and $l\geq 0$ when $i=2$ with suitable substitutions. Back tracking substitutions yield the formulas for $c^{(i)}_{2k}$ as in \eqref{eq:2k_1} and \eqref{eq:2k_2}. Formulas for \eqref{eq:2kp1_1} and \eqref{eq:2kp1_2} are constructed through analogous steps.\qed
\end{example}

Staying within the theme of Rogers--Ramanujan identities, we give a small example of the product basis technique. 

%Andrews--van Ekeren--Heluani \cite[(3.4.1)]{andrews2023singular}

\begin{example}\label{ex:AvEH} We can attempt to find a formula for the recurrence \[
e_{n+2} =(1 + q^{n+1} t - q^{2 n+2} t - q^{2 n+3} t) e_{n+1} + q^{3n+2} (1 - q^{n+1}) t^2 e_n, \] with initial conditions $e_0=1$ and $e_1=1-q t$. After exploring some other bases and failing, we can try to expand the sequence $e_n$ as \[e_n = \sum_{k=0}^n e'_k\, q^{nk} \qbinom{n}{k}.\] 
For doing so, we compute the product basis of $\cP^{(1)}$ and $\cC(1,0;0;1)$ (see Example~\ref{exm:power} and Lemma~\ref{lem:qbinom_comp_E}). The construction of this basis follows the steps of Example~\ref{Example_Product}. After constructing the compatibilities of this basis, we split the basis into 2 sections and analyze the even section. Then, constructing the recurrence of the summand $e'_k$ we see that $e'_k$ satisfies the recurrence $e'_{k+1} =- t e'_k$ and has the initial value $e'_0=1$. This is enough to deduce that $e'_k= (-t)^k$.
\end{example}

These types of series and recurrences are common in $q$-series and partitions \cite{uncu2021double, andrews2023singular, berkovich2019polynomial, bridges2022weighted, uncu2020polynomial}. In fact, we constructed the simple example, Example~\ref{ex:AvEH}, by looking at a recent paper of Andrews--van Ekeren--Heluani \cite[(3.4.1)]{andrews2023singular}. In their example, there is also a quadratic term: $q^{k^2}$ in the series. That extra factor only makes $e_n$ satisfy a higher order recurrence and makes it unnecessarily unpleasant to present here. However, the technique and its application work exactly the same.

Finally, we quickly give the outline of the proof of Theorem~\ref{thm:Main_Shunsuke_thm2}.
\begin{proof}[Proof of Theorem~\ref{thm:Main_Shunsuke_thm2}] 
The formulas for $c^{(i)}_{2k+\nu}$ in \eqref{eq:C_explicit} are made up of either two or four definite bilateral sums with the bounding variable $k$. For fixed $i=1$ or $2$ and $\nu=0$ or $1$, one can automatically find recurrences of these individual definite sums with respect to $k$ using an implementation of the Zeilberger's algorithm, e.g., see \cite{koutschanPhD, riese2003qmultisum, schneider2004summation}. Then one can find the recurrence satisfied by the sum of the definite sums using holonomic closure properties; this can be done by one of the implementations \cite{koutschanPhD, kauers2009mathematica}. Once this is done, it is easy to check the greatest common divisor (gcd) of this recurrence against the recurrences satisfied by $c^{(i)}_{2k+\nu}$ which can easily be deduced from~\eqref{eq:Shunsuke_RR_rec}. The gcd calculation can be done by \cite{ablinger2021qfunctions,koutschanPhD, kauers2009mathematica}. Once we establish that the gcd is the second order recurrence of $c^{(i)}_{2k+\nu}$ deduced from~\eqref{eq:Shunsuke_RR_rec}, all that remains is to check that the initial conditions match. This finishes the proof. 
\end{proof}

In Theorem~\ref{thm:Main_Shunsuke_thm}'s proof, we focus on finding and proving a closed formula of $c^{(i)}_{2k+\nu}$ using the factorial basis method. We note that, once the formula is found, we can prove Theorem~\ref{thm:Main_Shunsuke_thm} by following the steps outlined in the proof above. However, the factorial basis method, once found, also proves the recurrence of the summand. Hence, once a formula is reached it is already proven.

\section{Conclusions}\label{sec:conclusions}

In this paper, the factorial basis method has been extended into the realm of $q$-sequences. The factorial basis method allows proving well-known identities in $q$-combinatorics, but when selecting 
an appropriate factorial basis as a kernel, it is also effective in finding new identities or even showing such types of identities do not exist. A prototype for working with these $\beta(n)$-factorial basis automatically has been included in the SageMath package \texttt{pseries\_basis}, where the previous work in~\cite{JimenezPastor2023} was implemented. A SageMath notebook and a Mathematica file accompanying the examples of this paper can be found in the Git repository of the \texttt{pseries\_basis} package inside the notebooks directory.

We plan to extend and relax the compatibility relations explained in Section~\ref{sec:compatibility}. One possible relaxation is to include some $q^n$ shifts in $\alpha_{r,i}(m)$ in \eqref{equ:compatible}. We would also like to adapt the theory of shuffled basis introduced in \cite{JimenezPastor2023} to increase the diversity of the bases that one can try while attempting mathematical problems. Another task is to include negative shifts of power basis into this frame. This would allow bases of the form $q^{-n k }\qbinom{n}{k}$. 
%This may seem straightforward, however, we formed the theory over a basis of $\ZZ[q]$ and this would require us to extend to $\ZZ(q)$. For this task, we need to consider computability. 

On the implementation side, we plan to simplify the initiation and usage of the SageMath package \texttt{pseries\_basis}~\cite{JimenezPastor2023}. This implementation is aimed to be used on the side of the \texttt{qFunctions} Mathematica package~\cite{ablinger2021qfunctions}. Therefore, we plan to implement any necessary functions that would ensure the smooth transfer of the data from one system to the other.\\

\noindent{\bf Acknowledgements.} The first author was partially supported by the Poul Due Jensen Grant 883901. The second author would like to thank the EPSRC grant number EP/T015713/1 and the FWF grant P-34501N for partially supporting his research.

%%%%%%%%%%%%%%%%%%%%%%%%%%%%%%%%%%%%%%%%%
% The Bibliography
%%%%%%%%%%%%%%%%%%%%%%%%%%%%%%%%%%%%%%%%%
%\bibliographystyle{elsarticle-num-names} %Ali: changed from "elsarticle-num"
\bibliographystyle{ACM-Reference-Format}
\bibliography{biblio}

%%%%%%%%%%%%%%%%%%%%%%%%%%%%%%%%%%%%%%%%%
\end{document}